\documentclass[a4paper,UKenglish,cleveref, autoref, thm-restate]{lipics-v2021}

\usepackage{graphicx}
\usepackage{xcolor}
\usepackage{tikz}
\usetikzlibrary{
    arrows.meta,
    positioning,
    calc,
    automata,
    decorations.pathmorphing
}
\usepackage{booktabs}
\usepackage{array}
\usepackage{multirow}
\usepackage{todonotes}
\usepackage{xspace}

\newcommand{\Q}{\mathbb{Q}}

\newcommand{\Z}{\mathbb{Z}}
\newcommand{\R}{\mathbb{R}}

\newcommand{\TDS}{\textsf{TDS}\xspace}

\newcommand{\Paths}{\mathsf{Paths}} 
\newcommand{\FPaths}{\mathsf{Paths_{fin}}}
\newcommand{\trace}{\mathit{trace}}

\newcommand{\opt}{{\textsf{opt}}} 
\newcommand{\DS}{\mathrm{DS}_{\lambda}}

\renewcommand{\Pr}{\mathbb{P}} 

\newcommand{\FM}{\mathsf{FM}}
\newcommand{\Max}{\mathsf{Max}}
\newcommand{\Min}{\mathsf{Min}}

\newcommand{\BSCC}{\textsf{BSCC}\xspace}

\newcommand{\lang}[1]{L_{#1}} 
\newcommand{\langinf}[1]{L^\infty_{#1}} 

\usepackage{amsthm}

\newtheorem{problem}{Problem}

\hideLIPIcs  

\title{
The Stochastic Target Discounted-Sum Problem
} 

\author{Nathalie Bertrand}{Univ Rennes, Inria, CNRS, IRISA, France}{}{}{}
\author{Pranav Ghorpade}{University of Sydney, Australia}{}{}{}
\author{Senthil Rajasekaran}{Université Libre de Bruxelles, Belgium}{}{}{}
\author{Sasha Rubin}{University of Sydney, Australia}{}{}{}
\author{Moshe Y. Vardi}{Rice University, USA}{}{}{}

\authorrunning{N. Bertrand, P. Ghorpade, S. Rajasekaran, S. Rubin, M. Y. Vardi}

\Copyright{Jane Open Access and Joan R. Public} 

\ccsdesc[100]{} 

\keywords{Target Discounted-Sum Problem, Discounted-Sum Payoffs, Markov Chains, Markov Decision Processes, Stochastic Games} 

\category{} 

\relatedversion{} 

\acknowledgements{}

\nolinenumbers 

\begin{document}

\maketitle

\begin{abstract}
The \emph{target discounted-sum problem} (\TDS) asks, given a finite integer alphabet $\Sigma$, a rational discount factor $\lambda$, and a rational target $t$, whether some infinite sequence over $\Sigma$ has discounted sum exactly $t$. This problem remains open and underlies several open questions in automata theory, games, and Markov decision processes.

We introduce and solve its stochastic counterpart, the \emph{stochastic target discounted-sum problem}, which replaces existence by computation of the probability. We show that the probability that a random sequence generated by a finite Markov chain has discounted sum $t$ is rational and computable in pseudo-polynomial time. We further show how to decide, in polynomial time, whether the discounted-sum distribution of a Markov chain is atomless, and how to approximate to an arbitrary precision the probability that the discounted sum exceeds a rational threshold.

Our techniques for the stochastic \TDS problem allow us to make progress on \TDS objectives in stochastic games, which are known to be as hard as the \TDS problem. Restricting the maximizing player to finite-memory strategies, while allowing the minimizing player to use arbitrary strategies, we reduce the value problem and the synthesis problem to corresponding problems for safety objectives in stochastic games. This yields computable optimal values and deterministic optimal strategies with pseudo-polynomially bounded memory for stochastic games, and results in pseudo-polynomial-time algorithms for special cases of Markov decision processes and deterministic two-player games.
\end{abstract}

\section{Introduction}
\subparagraph*{Discounted sum and target discounted-sum problem}
Assessing the performance of a system often starts with assigning a numerical value to its executions. One way to do so, is through discounted-sum payoff, which aggregates an infinite sequence of rewards along an execution in a discounted sum~\cite{deAlfaro2003}. It is central to reinforcement learning~\cite{DBLP:books/lib/SuttonB98,DBLP:books/lib/BertsekasT96} hence optimization of Markov decision processes~\cite{Puterman94} and optimal strategies in games~\cite{10.1007/978-3-642-10631-6_13} and has also been studied from a language theoretic perspective for automata~\cite{DBLP:journals/corr/BokerH14, bertrand2026categorizerautomatadiscountedsumpayoffs, BCV22, DBLP:journals/lmcs/BokerH25}.

The discounted sum favors immediate rewards and mitigates the effect of future rewards. Formally, for a rational discount factor $\lambda \in (0,1)$ and a finite alphabet of rewards $\Sigma \subset \Z$, the discounted sum of an infinite sequence $d_0d_1d_2\cdots \in \Sigma^\omega$ is $\DS(d_0d_1\cdots) := \sum_{i=0}^{\infty} d_i \lambda^i$.  
Introduced by Raskin and later studied by Boker, Henzinger and Otop, the \emph{target discounted-sum problem} (\TDS)~\cite{boker_target_2015} asks, given $\lambda$, $\Sigma$ and a rational target $t$, whether some sequence over $\Sigma$ has discounted sum exactly~$t$. This simply stated problem is open, and constitutes the core difficulty in many open problems in automata theory, games, Markov decision processes, and number representations~\cite{boker_target_2015,DBLP:conf/fsttcs/HunterR14,DBLP:journals/fmsd/RandourRS17,chatterjee2017optimizing,filiot2020adversarial,brice2023rational}. 

\subparagraph*{The stochastic target discounted-sum problem}
In this paper we solve the stochastic counterpart of the target discounted-sum problem.
\begin{problem}[Stochastic Target Discounted-Sum]\label{prob:TDS}~\\
{\bf Given:} a finite alphabet $\Sigma \subseteq \Z$, a rational discount factor $\lambda \in (0,1)$, a rational target value $t$, and a finite Markov chain $C$ with a designated initial state, rational transition probabilities, and transitions labeled by elements of $\Sigma$.\\
{\bf Compute:} the probability that (the sequence of labels along) a random infinite path from the initial state has discounted sum equal to $t$.
\end{problem}
Our main contribution (\cref{thm:MC}) shows that \cref{prob:TDS} is solvable in pseudo-polynomial time. More precisely, if $\mu=\max_{d\in\Sigma}|d|$ is the largest absolute value in $\Sigma$, and $q_\lambda$ and $q_t$ the respective denominators of $\lambda$ and $t$, the algorithm runs in time polynomial in the size of the Markov chain and in the numerical values of $\mu$, $q_\lambda$, and $q_t$.

A natural approach in quantitative verification of Markov chains is to represent the set of behaviors of interest by a deterministic automaton, and then use a standard product construction to compute the probability that the Markov chain generates a run accepted by the automaton~\cite{PMC-book}. Applied to the stochastic \TDS problem, this amounts to considering the \TDS \emph{language}
$L_{[\Sigma,\lambda,t]}:=\{w\in\Sigma^\omega \mid \DS(w)=t\}$ and computing the
probability that $C$ generates a word in $L_{[\Sigma,\lambda,t]}$. 
When $1/\lambda$ is an integer, this
settles the problem immediately, since the \TDS language is then recognized by an
(effectively constructible) deterministic safety automaton~\cite{BCV22}. 
For arbitrary rational discount factors the \TDS language is in general not $\omega$-regular~\cite{BCV22}, so this approach does not apply directly.

To solve the stochastic \TDS problem for arbitrary rational discount factors, we show that the \TDS language contains an $\omega$-regular sub-language whose probability agrees with the one of the \TDS language in every finite Markov chain. To do so, our starting point is a notion of \emph{suffix sums}, implicit in the decidability proofs for special cases of the \TDS problem~\cite{boker_target_2015}.  Informally, the suffix sums of a sequence are the discounted sums of all its suffixes. Formally, for a sequence $d_0d_1d_2\dots$, these are the values $\DS(d_id_{i+1}\cdots)$ for every $i\geq 0$. 
Using this notion, we partition the \TDS language according to whether a sequence has finitely or infinitely many distinct suffix sums. Our key technical result is that, in every finite Markov chain, the probability of generating a sequence with given discounted sum and infinitely many distinct suffix sums is zero. 
Consequently, in every Markov chain, the probability of the \TDS language is equal to the probability of the sub-language of sequences with only finitely many distinct suffix sums, written $L^{<\infty}_{[\Sigma,\lambda,t]}$. We then show that $L^{<\infty}_{[\Sigma,\lambda,t]}$ can be recognized by an effectively constructible deterministic safety automaton of pseudo-polynomial size, giving a pseudo-polynomial-time algorithm for the stochastic target discounted-sum problem via the above-mentioned standard product construction~\cite{PMC-book}.

\subparagraph*{Distribution of discounted sums in Markov chains}
Beyond answering the stochastic \TDS problem, our techniques permit a finer analysis of the distribution of discounted-sum payoffs in finite Markov chains.
Classical algorithms for finite Markov chains with discounted-sum payoffs compute the expectation of the induced distribution~\cite{Puterman94}. The expected value reveals little about the distribution itself: for instance, it does not determine the probability mass at a given point, as in the stochastic \TDS problem.

To start better understanding the distribution of discounted sums, we give a polynomial-time algorithm to decide whether the cumulative distribution function of the discounted-sum payoff is continuous. Alternatively, this algorithm decides in polynomial time whether the distribution is atomless. In comparison to the stochastic \TDS, for which a target value is given as input, the question is whether the Markov chain admits no \emph{atom}, i.e. a target value with positive probability in the discounted-sum distribution. 

Second, we show how to approximate within any desired precision the probability that the discounted sum exceeds a threshold. A natural approach, and the one taken in~\cite{DBLP:journals/fmsd/RandourRS17}, is a conservative approximation obtained by truncation. Only the first $n$ steps of executions are considered while the tails are discarded, since their contribution to the discounted sum tends to zero as $n$ grows. The prefix then allows one to determine how the payoff compares to the threshold $t$, assuming the prefix value is far enough from $t$; otherwise the comparison is left undecided. This yields a lower and an upper bound on the probability that the payoff exceeds $t$. Although the lower bound converges to the desired probability, the gap between the bounds converges to the probability that the payoff equals $t$, which may be positive. Thus, truncation alone does not guarantee that a prescribed precision can be certified. We use our algorithm for the stochastic \TDS problem to compute the probability that the discounted sum equals threshold, and subtract it from the upper bound, so that the gap tends to zero, thus providing a stopping criterion that is guaranteed to be met for every precision $\varepsilon>0$.

\subparagraph*{\TDS objectives in stochastic games} $2$-player games with \TDS objectives were first introduced by Hunter and Raskin~\cite{DBLP:conf/fsttcs/HunterR14}, who left the problem open. 

For Markov decision processes (MDPs), (also referred to as $1\frac12$-player games, or 1-player stochastic games) with \TDS objectives,
we show that both the infimum probability over all strategies and the supremum probability over finite-memory strategies are rational and computable in pseudo-polynomial time. In both cases, the value is attained by a deterministic strategy with pseudo-polynomially bounded memory, which can be synthesized within the same time bound. 
Computing the unrestricted supremum remains open, since it subsumes the (non-stochastic) \TDS problem. 
We also show that the supremum under finite-memory strategies can be strictly smaller than the unrestricted supremum by giving a separating example in which an infinite-memory strategy attains probability $1$, while every finite-memory strategy attains probability $0$. Our results also yield a memory dichotomy: whenever a supremum-optimal strategy exists, either one exists with pseudo-polynomially bounded memory, or every such strategy requires infinite memory.

These results extend to stochastic games (also called $2\frac12$-player games) with \TDS objectives. We show that the value the maximizing player can guarantee using (unbounded) finite memory, against arbitrary strategies of the minimizing player, is computable and attained by a deterministic strategy with pseudo-polynomially bounded memory. $2$-player games are the special case of $2\frac12$-player games. In this setting, the value and an optimal strategy are computable in pseudo-polynomial time, settling the finite-memory restriction of the $2$-player games with \TDS objectives of~\cite{DBLP:conf/fsttcs/HunterR14}.

\subparagraph*{More related work}
In~\cite{boker_target_2015} the \TDS problem is related to, among other topics, $\beta$-expansions~\cite{Rnyi1957RepresentationsFR, Schmidt1980OnPE} and discounted-sum automata~\cite{DBLP:journals/corr/BokerH14,DBLP:journals/lmcs/BokerH25}, and decidability is established for several special cases. In particular, they show that it is decidable whether there exists an \emph{eventually periodic} word whose discounted sum is equal to the target. Apart from deterministic settings of quantitative games~\cite{DBLP:conf/fsttcs/HunterR14,brice2023rational,filiot2020adversarial} and discounted sum automata~\cite{DBLP:journals/corr/BokerH14,DBLP:journals/lmcs/BokerH25}, \TDS problem is also a source of hardness for several open problems in stochastic settings: multi-objective model checking in MDPs~\cite{chatterjee_multi-objective_2013}, model checking percentile queries in MDPs~\cite{DBLP:journals/fmsd/RandourRS17}, optimizing piecewise-constant utilities in MDPs~\cite{bertrand2026categorizerautomatadiscountedsumpayoffs}, and optimizing expectation with guarantees in POMDPs~\cite{chatterjee2017optimizing} are all known to be at least as hard as \TDS. This motivates our study of the stochastic \TDS problem and subsequently its application to \TDS objectives in MDPs.

Classical algorithms for finite Markov chains compute the expected discounted reward~\cite{Puterman94}. White~\cite{white_minimizing_1993} and Wu and Lin~\cite{wu_minimizing_1999} consider, in MDPs, the probability that the discounted reward lies above or below a given threshold. They establish structural results, including optimality equations and the existence of optimal policies under appropriate conditions, but leave the algorithmic treatment of these problems open even for Markov chains. 
Our results on Markov chains partially address these open problems, as they enable the approximation --up to any precision-- of the probability that the discounted reward lies above a given threshold.

\section{Preliminaries}\label{sec:prel}
 
\subparagraph*{Words}
Fix a finite \emph{alphabet} $\Sigma$. A \emph{word} (resp.\ \emph{finite word}) is
an infinite (resp.\ finite) sequence of symbols from $\Sigma$; we write $\Sigma^\omega$
and $\Sigma^*$ for the sets of words and of finite words, $|w|$ for the length of a
finite word $w$, and $\varepsilon$ for the empty word. Positions are numbered from $0$,
so that $w=w[0]w[1]w[2]\cdots$, and we write $w[0..i]:=w[0]\cdots w[i]$ for a prefix,
$w[i..]:=w[i]w[i+1]\cdots$ for a suffix, and $w\cdot w'$ for concatenation. We use the
same position notation for the sequences of states introduced below.

\subparagraph*{Automata}
A \emph{deterministic safety automaton} is a tuple $\mathcal A=(Q,\Sigma,q_\textsf{init},\delta,F)$,
where $Q$ is a finite set of states, $q_\textsf{init}\in Q$ is the initial state,
$\delta\colon Q\times\Sigma\to Q$ is the transition function, and $F\subseteq Q$ is the
set of \emph{safe} states. The \emph{run} of $\mathcal A$ on a word $w$ is the unique
sequence $\rho=q_0q_1\cdots\in Q^\omega$ with $q_0=q_\textsf{init}$ and $q_{i+1}=\delta(q_i,w[i])$
for all $i\ge 0$; the automaton \emph{accepts} $w$ if $q_i\in F$ for all $i\ge 0$. We
write $L(\mathcal A)\subseteq\Sigma^\omega$ for the set of accepted words by $\mathcal A$, and
$|\mathcal A|:=|Q|$ for the size of $\mathcal A$. Every language of this form is
\emph{$\omega$-regular} \cite[Chap.~4]{PMC-book}.
 
\subparagraph*{Markov chains} We follow the standard terminology of~\cite{PMC-book,Puterman94}.
A \emph{Markov chain} (MC) is a tuple $\mathcal C=(S,\Sigma,s_\textsf{init},P,\ell)$, where $S$ is
a countable set of states, $s_\textsf{init}\in S$ is the initial state,
$P\colon S\times S\to[0,1]\cap \Q$ is a transition probability function satisfying
$\sum_{s'\in S}P(s,s')=1$ for every $s\in S$, and $\ell\colon S\times S\to\Sigma$ labels
transitions.
The MC is \emph{finite} if $S$ is finite, in which case we write $|\mathcal C|$ for the size of the
representation of $\mathcal C$: states, actions, labels, and transitions with non-zero probability are encoded explicitly, while probabilities are encoded in binary.
 
A \emph{path} of $\mathcal C$ is an infinite sequence $\pi=s_0s_1\cdots\in S^\omega$ such
that $s_0=s_\textsf{init}$ and $P(s_i,s_{i+1})>0$ for every $i\ge 0$. A \emph{finite path} is a
finite prefix of a path. We write $\Paths(\mathcal C)$ and $\FPaths(\mathcal C)$ for the
sets of paths and of finite paths of $\mathcal C$. The \emph{trace} of a path
$\pi=s_0s_1\cdots$ is the word $\trace(\pi):=\ell(s_0,s_1)\,\ell(s_1,s_2)\cdots$, and the
trace of a finite path $\bar\pi=s_0\cdots s_n$ is the finite word
$\trace(\bar\pi):=\ell(s_0,s_1)\cdots\ell(s_{n-1},s_n)$ of length $n$. For $s\in S$, let $\mathcal C_s$ denote the MC obtained from
$\mathcal C$ by replacing its initial state with $s$. Given a finite path
$\bar\pi=s_0\cdots s_n$ of $\mathcal C$ and a path $\pi$ of $\mathcal C_{s_n}$, we write
$\bar\pi\odot\pi:=s_0\cdots s_{n-1}\cdot\pi$ for their \emph{concatenation}, which is a
path of $\mathcal C$ and satisfies
$\trace(\bar\pi\odot\pi)=\trace(\bar\pi)\cdot\trace(\pi)$.
The \emph{cylinder set} of a finite path $\bar\pi$ is
$\mathrm{Cyl}(\bar\pi):=\{\pi\in\Paths(\mathcal C)\mid\bar\pi\text{ is a prefix of }\pi\}$, and, with overloading notation, for $\bar \pi =s_0\cdots s_n$ denote $P(\bar\pi):=\prod_{i=0}^{n-1}P(s_i,s_{i+1})$, where the empty
product is $1$. 
Every MC $\mathcal C$ induces a probability space
$(\Paths(\mathcal C),\mathcal F_{\mathcal C},\Pr_{\mathcal C})$, where
$\mathcal F_{\mathcal C}$ is the $\sigma$-algebra generated by the cylinder sets and
$\Pr_{\mathcal C}$ is the unique probability measure with
$\Pr_{\mathcal C}(\mathrm{Cyl}(\bar\pi))=P(\bar\pi)$ for every
$\bar\pi\in\FPaths(\mathcal C)$. A language
$L\subseteq\Sigma^\omega$ is \emph{measurable} if it is Borel for the usual topology on
$\Sigma^\omega$; in which case, the event
$E^{\mathcal C}_L:=\{\pi\in\Paths(\mathcal C)\mid\trace(\pi)\in L\}$ belongs to
$\mathcal F_{\mathcal C}$ for every MC $\mathcal C$ over $\Sigma$, and we write
$\Pr_{\mathcal C}(L):=\Pr_{\mathcal C}(E^{\mathcal C}_L)$. Every $\omega$-regular language
is measurable \cite[Chap.~10.3]{PMC-book}.

A set $T\subseteq S$ is \emph{strongly connected} if for all $s,s'\in T$ there are
$t_0,\dots,t_n\in T$ with $t_0=s$, $t_n=s'$ and $P(t_i,t_{i+1})>0$ for all $0\le i<n$. A
\emph{strongly connected component} (SCC) is a maximal strongly connected set, and an SCC
$B$ is a \emph{bottom} SCC (BSCC) if $P(s,s')=0$ for all $s\in B$ and $s'\notin B$. We
write $E_\BSCC(\mathcal C)$ for the event consisting of all paths that enter some BSCC $B$. Following results can be found in~\cite{PMC-book}.
 
\begin{proposition}\label{prop:mc}
Let $\mathcal C$ be a finite MC, let $E\in\mathcal F_{\mathcal C}$, and let $\bar\pi$ be a finite
path of $\mathcal C$ ending in a state $s$. Then:
\begin{enumerate}
  \item $\Pr_{\mathcal C}\bigl(E\mid\mathrm{Cyl}(\bar\pi)\bigr)
        =\Pr_{\mathcal C_s}\bigl(\{\pi\in\Paths(\mathcal C_s)\mid\bar\pi\odot\pi\in E\}\bigr)$;
  \item if $\Pr_{\mathcal C}(E)>0$, then there is a path $\pi\in E$ with
        $\Pr_{\mathcal C}\bigl(E\mid\mathrm{Cyl}(\pi[0..n])\bigr)>0$ for every $n\ge0$;
  \item $\Pr_{\mathcal C}(E_\BSCC(\mathcal C))=1$.
\end{enumerate}
\end{proposition}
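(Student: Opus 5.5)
The three items are standard facts about finite Markov chains. I will prove them directly from the definition of $\Pr_{\mathcal C}$ as the unique measure with $\Pr_{\mathcal C}(\mathrm{Cyl}(\bar\pi))=P(\bar\pi)$. Uniqueness comes from Carathéodory's extension theorem, since cylinders form a $\pi$-system generating $\mathcal F_{\mathcal C}$.

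For item~1, I consider the map $\phi\colon\Paths(\mathcal C_s)\to\mathrm{Cyl}(\bar\pi)$, $\pi\mapsto\bar\pi\odot\pi$. It is a bijection, and it sends cylinders to cylinders: $\phi(\mathrm{Cyl}(\bar\rho))=\mathrm{Cyl}(\bar\pi\odot\bar\rho)$. Hence $\phi$ is bimeasurable, and the preimage $\{\pi\mid\bar\pi\odot\pi\in E\}$ is measurable. Define two measures on $\Paths(\mathcal C_s)$: $\nu_1(A):=\Pr_{\mathcal C}(\phi(A))/P(\bar\pi)$ and $\nu_2:=\Pr_{\mathcal C_s}$. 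On cylinders, $\nu_1(\mathrm{Cyl}(\bar\rho))=P(\bar\pi)P(\bar\rho)/P(\bar\pi)=\nu_2(\mathrm{Cyl}(\bar\rho))$. By uniqueness the two measures agree on all measurable sets. Applying this to $A=\phi^{-1}(E\cap\mathrm{Cyl}(\bar\pi))$ gives exactly the conditional probability formula. Note that $P(\bar\pi)>0$ because $\bar\pi$ is a finite path.

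For item~2, I build $\pi$ inductively, one transition at a time, keeping the invariant $\Pr_{\mathcal C}(E\mid\mathrm{Cyl}(\pi[0..n]))>0$. The base case $n=0$ holds because $\mathrm{Cyl}(s_\textsf{init})$ is the whole space. For the inductive step, $\mathrm{Cyl}(\pi[0..n])$ is the finite disjoint union of the sets $\mathrm{Cyl}(\pi[0..n]s')$ over successors $s'$. By additivity, some successor keeps the conditional probability positive, so I extend $\pi$ by it.

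The main obstacle is that the resulting limit path need not lie in $E$ for an arbitrary measurable $E$; for instance, positivity along a path can hold even when $E$ is non-closed. I would therefore strengthen the argument using the Lévy zero-one law (martingale convergence). For almost every path $\pi$, $\Pr(E\mid\mathrm{Cyl}(\pi[0..n]))\to\mathbf 1_E(\pi)$. Since $\Pr(E)>0$, the set of paths in $E$ along which this convergence holds has positive measure, hence is nonempty. Pick $\pi$ in it. Its conditional probabilities tend to $1$, so they are positive for all large $n$. For small $n$, $\mathrm{Cyl}(\pi[0..n])\supseteq\mathrm{Cyl}(\pi[0..m])$ for $m\ge n$, so positivity at a large $m$ propagates down to every earlier prefix. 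This replaces the inductive construction entirely.

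For item~3, I use finiteness. From every state $s$ some BSCC is reachable, within at most $|S|$ steps, with probability at least some $p>0$; take the minimum over the finitely many states. By item~1 and the Markov property, the probability of avoiding every BSCC during the first $k|S|$ steps is at most $(1-p)^k$. This tends to $0$, so $\Pr_{\mathcal C}(E_\BSCC(\mathcal C))=1$.
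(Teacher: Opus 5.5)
Your proof is correct. The paper gives no argument of its own for this proposition. It states that the three facts can be found in~\cite{PMC-book} and uses them as black boxes, so your write-up supplies a self-contained justification of what the paper delegates to the textbook.

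Items~1 and~3 follow the standard arguments. For item~1 you transport $\Pr_{\mathcal C}$, restricted to $\mathrm{Cyl}(\bar\pi)$, along $\pi\mapsto\bar\pi\odot\pi$ and invoke uniqueness on cylinders. For item~3 you take a uniform bound $p>0$ for reaching a BSCC within $|S|$ steps and iterate it via item~1 to get $(1-p)^k$.

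For item~2 you rightly noticed that the greedy step-by-step construction can yield a limit path outside a non-closed $E$, for example looping forever in a transient self-loop when $E$ is ``eventually leave''. L\'evy's zero-one law does repair this, and it even gives conditional probabilities tending to $1$ along almost every path of $E$. But only positivity is needed later, in Lemma~\ref{Lemma:MC-SC-2} and Theorem~\ref{thm:MC-measure-zero}, and positivity follows without martingales. The paths of $E$ that have some prefix $\bar\rho$ with $\Pr_{\mathcal C}(E\cap\mathrm{Cyl}(\bar\rho))=0$ form a countable union of null sets, hence a null set. Removing it from $E$ leaves probability $\Pr_{\mathcal C}(E)>0$. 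Any remaining path satisfies item~2 directly, since also $P(\pi[0..n])>0$ for every $n$.
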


\section{The stochastic target discounted-sum problem}
The target discount-sum language (\TDS language for short) is the language of words with discounted-sum equal to the target value. Formally, for a finite alphabet $\Sigma \subset \Z$, a discount factor $\lambda \in (0,1)$ and a target value $t \in \R$, the \emph{\TDS language}  is $L_{[\Sigma,\lambda,t]} := \{w \in \Sigma^\omega \mid \DS(w)=t\}$. Our main contribution is to establish that in finite Markov chains the probability of the \TDS language can be computed in pseudo-polynomial time, thus proving that the stochastic \TDS problem is solvable in pseudo-polynomial time. More precisely:

\begin{theorem}[Resolution of the stochastic \TDS problem] \label{thm:MC} Let $\Sigma \subset \Z$ be a finite alphabet, $\lambda = p/q \in (0,1) \cap \Q$ a discount factor and $t = a/b \in \Q$ a target value, both in reduced form, let $\mu = \max_{\sigma \in \Sigma} |\sigma|$, and let $C$ be a finite Markov chain over $\Sigma$. Then $\Pr_C(L_{[\Sigma,\lambda,t]})$ is rational and computable in time polynomial in $|C| \cdot |\Sigma|\cdot b q^2 \mu$.
\end{theorem}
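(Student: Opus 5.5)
The plan follows the strategy outlined in the introduction. We split the \TDS language as $L_{[\Sigma,\lambda,t]} = L^{<\infty}_{[\Sigma,\lambda,t]} \cup L^{\infty}_{[\Sigma,\lambda,t]}$, according to whether a word has finitely or infinitely many distinct suffix sums $\DS(w[i..])$. We then prove two things: $\Pr_C(L^{\infty}_{[\Sigma,\lambda,t]})=0$ for every finite MC $C$, and $L^{<\infty}_{[\Sigma,\lambda,t]}$ is recognized by a deterministic safety automaton of size polynomial in $|\Sigma|\, b q^2 \mu$. The standard product construction of~\cite{PMC-book} then finishes the proof. The probability of a safety language in the product is obtained by solving a linear system with rational coefficients. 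Hence it is rational and computable in time polynomial in the product size, which is $|C|$ times the automaton size.

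\textbf{Suffix sum recurrence and the automaton.} Write $x_i := \DS(w[i..])$. Then $x_0=t$, $x_{i+1} = (x_i - w[i])/\lambda$, and every $x_i$ lies in $[-\mu/(1-\lambda), \mu/(1-\lambda)]$. With $\lambda=p/q$, an easy induction gives $x_i = N_i/(b p^i)$ for integers $N_i$. So if the set of $x_i$ is finite, their denominators (in reduced form) are bounded. I would first show that the reduced denominator of $x_i$ divides $b$ times a bounded power of $p$. Then I would show that finiteness forces this power to stabilise: if $x_i$ has a factor of $p$ in its denominator, applying the map multiplies by $q/p$, which tends to increase the $p$-adic denominator, while $\gcd(p,q)=1$. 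I expect this to yield that eventually, and in fact from position $0$ or after a bounded prefix, all $x_i$ lie in a set $V$ of rationals with denominator dividing $b$ (or $b$ times a small factor) inside the bounded interval. Such a set has cardinality $O(b\mu/(1-\lambda)) \le O(b q\mu)$. The automaton has state space $V$ plus a sink. From state $x$, on reading $d$, it goes to $(x-d)/\lambda$ if that value lies in $V$ and in the interval, and to the sink otherwise; the safe states are $V$. Conversely, any word whose run stays in $V$ forever has bounded suffix values satisfying the recurrence. Hence $x_0 = \sum_{i<n} w[i]\lambda^i + \lambda^n x_n \to \DS(w)$, so $\DS(w)=t$, and the word has finitely many suffix sums. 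This gives exact recognition of $L^{<\infty}$. The size bound $b q^2\mu$ leaves slack for the extra factors of $q$ or $p$ that may be needed in the denominators.

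\textbf{Zero probability of $L^{\infty}$ (main obstacle).} Suppose $\Pr_C(L^\infty_{[\Sigma,\lambda,t]})>0$. By \cref{prop:mc}(2) there is a path $\pi$ such that conditioning on each prefix keeps positive probability. By \cref{prop:mc}(1,3) we may pass to a suffix inside a BSCC $B$ from a state $s$, with $\Pr_{C_s}(\DS = x)>0$ for some value $x$ that has an unbounded denominator exponent. The key idea is the following. Consider the function $f(s,x) := \Pr_{C_s}(\DS = x)$. Along a path, $f(s_i,x_i)$ is a martingale: $f(s,x)=\sum_{s'} P(s,s') f(s', (x-\ell(s,s'))/\lambda)$. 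On the event in question, the martingale converges to $1$ almost surely (Lévy 0--1). So the path eventually visits pairs $(s_i,x_i)$ with $f(s_i,x_i)>1/2$. For a fixed state $s$, the atoms of the distribution of $\DS$ under $C_s$ that have mass greater than $1/2$ form a set with at most one element. Since $B$ is finite, there are at most $|B|$ such pairs in total. Yet the $x_i$ along the path take infinitely many values, and $f>1/2$ must hold for all large $i$. This is a contradiction. The details to verify are the martingale property and the almost-sure convergence to the indicator. I expect this to be the delicate step, and I would try the Lévy upward theorem applied to the indicator of the event $\{\DS(\text{suffix})=x_i\}$, which is the same event at every step.
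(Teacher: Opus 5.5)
Your proposal is correct and has the same architecture as the paper: show $L^\infty$ is null, recognise $L^{<\infty}$ by a safety automaton on required suffix sums, and finish with the product construction. The null-set argument, however, is genuinely different. The paper proves a strong 0--1 law in strongly connected chains (Lemma~\ref{Lemma:MC-SC-1}: an atom of positive mass forces every path to have that discounted sum). It does this by showing the maximal atom mass $p^*$ is attained and, by convexity, propagates along every finite path, and then using the divergence of $|v-v'|/\lambda^n$. It then obtains Lemma~\ref{Lemma:MC-SC-2} by pigeonhole and lifts the result to arbitrary chains via BSCCs. You instead use three ingredients: the Doob martingale $M_i=f(s_i,x_i)=\Pr(L_t\mid\mathcal F_i)$ (the identity is \eqref{eq:claim-MC}), L\'evy's upward theorem, and the fact that each state carries at most one atom of mass $>1/2$. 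Together these give that, along almost every path in $L_t$, the pair $(s_i,x_i)$ eventually ranges over at most $|S|$ pairs, so the suffix sums take finitely many values. This is shorter and needs neither strong connectivity nor BSCCs. The price is reliance on martingale convergence rather than elementary arguments. You also lose the structural Lemma~\ref{Lemma:MC-SC-1}, which the paper reuses for the polynomial-time atomlessness test of Theorem~\ref{thm:atoms}.

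Two remarks on execution. First, drop the path from Proposition~\ref{prop:mc}(2) and the BSCC step: that proposition only guarantees positive conditional probabilities along the path, not $M_i\to 1$. Argue on events instead. By your pigeonhole, the event of $L^\infty_t$ is disjoint from $\{M_i\to1\}$, while $\Pr(E_{L_t}\setminus\{M_i\to1\})=0$ by L\'evy; hence $\Pr(L^\infty_t)=0$.

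Second, your denominator sketch closes cleanly, and from position $0$, with no ``eventually'' needed. Write $\nu_f$ for the $f$-adic valuation. For a prime $f\mid p$ we have $f\nmid q$, so $\nu_f(x_i)<0$ implies $\nu_f(x_{i+1})=\nu_f(x_i)-\nu_f(p)$; finiteness of the suffix sums therefore forces $\nu_f(x_i)\ge 0$ for all $i$. For a prime $f\nmid p$, your formula $x_i=N_i/(bp^i)$ directly gives $\nu_f(x_i)\ge-\nu_f(b)$. Hence every suffix sum has denominator dividing $b$. This yields $O(bq\mu)$ states, a factor $p$ fewer than the paper's set $Q_{[\Sigma,\lambda,t]}$, which uses denominator $bp$.
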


Without loss of generality, we assume in the sequel computability results that $\Sigma \subset \Z$.
Indeed, any finite rational alphabet can be scaled by a input-dependent constant, together with the target $t$, to obtain an equivalent instance over an integer alphabet.  

To prove the result, we partition the target discounted sum language $L_{[\Sigma,\lambda,t]}$ into two sub-languages $L^{<\infty}_{[\Sigma,\lambda,t]}$ and $L^{\infty}_{[\Sigma,\lambda,t]}$ according to the finiteness of the suffix sums:
\begin{align}
    &L^{<\infty}_{[\Sigma,\lambda,t]}\label{eq:TDS-lang-fin}
:=\left\{
w \in \Sigma^\omega
\;\middle|\;
\DS(w)=t
\text{ and }
\{\DS(w[i..\infty]) \mid i \ge 0\}
\text{ is finite}
\right\}, \notag\\
    &L^{\infty}_{[\Sigma,\lambda,t]}
:=\left\{
w \in \Sigma^\omega
\;\middle|\;
\DS(w)=t
\text{ and }
\{\DS(w[i..\infty]) \mid i \ge 0\}
\text{ is infinite}
\right\}. \notag
\end{align}
One can show that the languages $L_{[\Sigma,\lambda,t]}$, $L^{<\infty}_{[\Sigma,\lambda,t]}$, and $L^{\infty}_{[\Sigma,\lambda,t]}$ are measurable (see Lemma~\ref{lem:measurable} in Appendix~\ref{app:proof}). The proof of~\cref{thm:MC} then consists of two building blocks.
First, we establish a key structural property of finite Markov chains, that implies that $L^{\infty}_{[\Sigma,\lambda,t]}$ has zero measure \footnote{The result even holds more generally for finite alphabet over reals.}.

\begin{theorem}\label{thm:MC-measure-zero}
Let $\Sigma \subset \R$ be a finite alphabet, $\lambda \in (0,1)$ a discount factor, and  $t \in \R$ a target value. For every finite Markov chain $C$ over alphabet $\Sigma$, 
$\Pr_C(L^\infty_{[\Sigma,\lambda,t]}) = 0$.
\end{theorem}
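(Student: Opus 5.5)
The plan is a martingale argument on the pairs (current state, current suffix sum). For a state $s$ of $C$ and a real $x$, let $f(s,x) := \Pr_{C_s}(L_{[\Sigma,\lambda,x]})$ be the probability that a random path from $s$ has discounted sum exactly $x$. For a fixed $s$, the events $\{\DS=x\}$ for distinct $x$ are pairwise disjoint. Hence $\sum_x f(s,x)\le 1$, and in particular for each state $s$ there is \emph{at most one} real $x$ with $f(s,x)>1/2$. Let $A := \{(s,x) \mid f(s,x) > 1/2\}$. Then $A$ has at most $|S|$ elements, which is finite because $C$ is finite.

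Next fix a path $\pi = s_0s_1\cdots$ with trace $w = d_0d_1\cdots$, and define $x_0 := t$ and $x_{i+1} := (x_i - d_i)/\lambda$. Since $\DS(w) = d_0 + \lambda\,\DS(w[1..])$, an induction shows the following. We have $\DS(w)=t$ if and only if $\DS(w[i..]) = x_i$. In that case $x_i$ is exactly the $i$-th suffix sum, and it is determined by the prefix $\pi[0..i]$. By \cref{prop:mc}(1), applied to the event $E = E^{C}_{L_{[\Sigma,\lambda,t]}}$, we get
\[
\Pr_C\bigl(E \mid \mathrm{Cyl}(\pi[0..i])\bigr) = f(s_i, x_i).
\]
So $M_i := f(s_i,x_i)$ is the Doob martingale $\mathbb{E}[\mathbf 1_E \mid \mathcal F_i]$, where $\mathcal F_i$ is the $\sigma$-algebra generated by cylinders of length $i$. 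Measurability of $E$ is given by Lemma~\ref{lem:measurable}. Since $\mathcal F_{\mathcal C}$ is generated by $\bigcup_i \mathcal F_i$, Lévy's upward (0-1) theorem gives $M_i \to \mathbf 1_E$ almost surely. Thus for almost every $\pi \in E$ there is an index $N$ with $f(s_i,x_i) > 1/2$ for all $i \ge N$, that is, $(s_i,x_i)\in A$ for all $i\ge N$.

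For such a path, the suffix sums $x_i$ with $i\ge N$ all lie in the finite projection of $A$ onto its second coordinate. Together with the finitely many $x_0,\dots,x_{N-1}$, this makes $\{\DS(w[i..]) \mid i\ge 0\}$ finite. Hence almost every path in $E$ has trace in $L^{<\infty}_{[\Sigma,\lambda,t]}$, and therefore $\Pr_C(L^\infty_{[\Sigma,\lambda,t]}) = 0$. Nothing in the argument uses integrality or rationality, so it works for real alphabets, real $\lambda$ and real $t$.

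The main obstacle is the identification of $\Pr(E\mid\mathrm{Cyl}(\pi[0..i]))$ with $f(s_i,x_i)$, and the careful invocation of the martingale convergence theorem on the path space. In particular, one must check that the cylinder filtration generates $\mathcal F_{\mathcal C}$, so that the limit is $\mathbf 1_E$ and not merely a conditional expectation. If one prefers to avoid Lévy's theorem, the fallback is \cref{prop:mc}(2) combined with a density argument. That route is more delicate, so the martingale route is the one to try first.
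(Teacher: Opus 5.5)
Your proof is correct, and it takes a genuinely different route from the paper's.

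The paper argues structurally, in three steps. First, Lemma~\ref{Lemma:MC-SC-1} proves a 0-1 law for strongly connected chains: the maximal atom mass $p^*$ is attained, and a convex-combination argument propagates it along every finite path, so a positive atom forces every path to have the same discounted sum. Second, Lemma~\ref{Lemma:MC-SC-2} uses \cref{prop:mc}(2) to pick a path in $L^\infty_t$ all of whose prefixes carry positive conditional mass. It then gets a contradiction by pigeonhole on states: two visits to the same state with different required suffix sums, each of which would have to be the unique discounted sum from that state. Third, it lifts the result to arbitrary chains through BSCCs via \cref{prop:mc}(3).

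You bypass both the strongly connected case and the BSCC decomposition. You apply Lévy's upward theorem to the Doob martingale $f(s_i,x_i)$ and observe that each state carries at most one atom of mass exceeding $1/2$. Together these confine almost every path of $E$ eventually to the finite set $A$. Your argument is shorter and uniform, but it relies on martingale convergence. It also needs the cylinder filtration to generate $\mathcal F_{\mathcal C}$, which holds here by definition. The paper's route stays within the elementary facts of \cref{prop:mc}. In exchange it obtains the structural Lemma~\ref{Lemma:MC-SC-1}, which it reuses to decide atomlessness in Theorem~\ref{thm:atoms}.

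Your argument in fact recovers a pathwise version of that lemma. Along almost every path of $E$, $M_i$ eventually takes values in the finite set $f(A)$ and converges to $1$. Hence it is eventually equal to $1$.
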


 Then, we study the sub-languages $L^{<\infty}_{[\Sigma,\lambda,t]}$ and  $L^\infty_{[\Sigma,\lambda,t]}$ to obtain a characterization of $\omega$-regularity for the \TDS language $L_{[\Sigma,\lambda,t]}$.

\begin{theorem}\label{thm:omega-regularity}
Let $\Sigma \subset \Z$ be a finite alphabet, $\lambda = p/q \in (0,1) \cap \Q$ a discount
factor and $t = a/b \in \Q$ a target value, with $a/b$ and $p/q$ in reduced form, and let
$\mu = \max_{\sigma \in \Sigma} |\sigma|$. Then:
\begin{enumerate}
  \item The language $L^{<\infty}_{[\Sigma,\lambda,t]}$ is recognized by a deterministic safety automaton $\mathcal{A}_{[\Sigma,\lambda,t]}$ with $O(b q^2 \mu)$ states; moreover, $\mathcal{A}_{[\Sigma,\lambda,t]}$ can be constructed in time $O(b q^2 \mu \cdot |\Sigma|)$.
  \item The language $L^{\infty}_{[\Sigma,\lambda,t]}$ is $\omega$-regular if and only if it is empty.
\end{enumerate}
Thus the target discounted sum language $L_{[\Sigma,\lambda,t]}$  is $\omega$-regular iff $L_{[\Sigma,\lambda,t]} = L^{<\infty}_{[\Sigma,\lambda,t]}.$
\end{theorem}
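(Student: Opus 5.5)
The plan is to build the automaton from the recursion satisfied by the suffix sums. For $w=d_0d_1\cdots$ write $s_i:=\DS(w[i..])$. Then $s_i=d_i+\lambda s_{i+1}$, that is, $s_{i+1}=(s_i-d_i)/\lambda$, and $|s_i|\le \mu/(1-\lambda)\le \mu q$ because $1-\lambda=(q-p)/q\ge 1/q$. So $\mathcal A_{[\Sigma,\lambda,t]}$ will track the current suffix sum. Its safe states are the rationals in
\[V:=\tfrac1b\Z\cap[-\mu q,\mu q],\]
plus one non-safe sink. The initial state is $t=a/b$, and on letter $d$ a state $v$ moves to $(v-d)/\lambda$ if this value lies in $V$, and to the sink otherwise. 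This gives $O(b q\mu)\subseteq O(bq^2\mu)$ states and $O(|V|\cdot|\Sigma|)$ transitions, each computed with constant-size rational arithmetic, which yields the stated construction time.

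\emph{Soundness.} Suppose the run on $w$ stays in $V$, visiting $v_0=t,v_1,v_2,\dots$. Unrolling $v_i=d_i+\lambda v_{i+1}$ gives $t=\sum_{j<n}d_j\lambda^j+\lambda^n v_n$. Since the $v_n$ are bounded, letting $n\to\infty$ gives $\DS(w)=t$. The same argument applied to each suffix gives $v_i=s_i$, and since $V$ is finite, $w\in L^{<\infty}_{[\Sigma,\lambda,t]}$.

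\emph{Completeness.} This is the main step: if $\DS(w)=t$ and $\{s_i\}$ is finite, then every $s_i$ lies in $\frac1b\Z$. The bound $|s_i|\le\mu q$ is automatic, so only the denominators need work, and I would argue prime by prime.
\begin{itemize}
\item For a prime $r\mid p$ (so $r\nmid q$), apply $r$-adic valuations to $s_{i+1}=q(s_i-d_i)/p$. If ever $v_r(s_i)<\min(0,v_r(1/b))$, then $v_r(s_i-d_i)=v_r(s_i)$, so $v_r(s_{i+1})<v_r(s_i)$. The valuations would then decrease forever and the $s_j$ would be pairwise distinct, contradicting finiteness.
\item For a prime $r\nmid p$, use the closed form $p^i s_i=q^i t-\sum_{j<i}d_j p^j q^{i-j}$. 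The sum is an integer, and $p^i$ is an $r$-adic unit, so $v_r(s_i)\ge\min(0,v_r(t))\ge v_r(1/b)$.
\end{itemize}
Since $b\ge1$ and $v_r(1/b)=-v_r(b)\le0$ for every prime $r$, the bound $\min(0,v_r(1/b))$ in the first case equals $v_r(1/b)$. Hence in both cases $v_r(s_i)\ge v_r(1/b)$ for every prime $r$, so $s_i\in\frac1b\Z$.

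I expect the only delicate point to be the first case when $\gcd(b,p)>1$. There one has to check that the strict decrease argument still starts from the threshold $v_r(1/b)$ and not from $0$. This works because $d_i\in\Z$ cannot cancel a negative valuation, which is exactly where $\Sigma\subset\Z$ is used.

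For item~2, the empty language is trivially $\omega$-regular. Conversely, every nonempty $\omega$-regular language contains an ultimately periodic word $uv^\omega$. Such a word has only finitely many distinct suffixes, hence finitely many suffix sums, so it cannot belong to $L^{\infty}_{[\Sigma,\lambda,t]}$. Therefore a nonempty $L^\infty_{[\Sigma,\lambda,t]}$ is not $\omega$-regular.

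For the final sentence, $L_{[\Sigma,\lambda,t]}$ is the disjoint union of $L^{<\infty}_{[\Sigma,\lambda,t]}$ and $L^{\infty}_{[\Sigma,\lambda,t]}$, and the first part is $\omega$-regular by item~1. If $L_{[\Sigma,\lambda,t]}=L^{<\infty}_{[\Sigma,\lambda,t]}$, it is $\omega$-regular. Conversely, if $L_{[\Sigma,\lambda,t]}$ is $\omega$-regular, then $L^{\infty}_{[\Sigma,\lambda,t]}=L_{[\Sigma,\lambda,t]}\setminus L^{<\infty}_{[\Sigma,\lambda,t]}$ is $\omega$-regular by closure under Boolean operations. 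By item~2 it is then empty, so $L_{[\Sigma,\lambda,t]}=L^{<\infty}_{[\Sigma,\lambda,t]}$.
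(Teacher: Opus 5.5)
Your proof is correct, and its overall architecture is the paper's: an automaton tracking the required suffix sum, soundness by unrolling $v_i=d_i+\lambda v_{i+1}$, item~2 via ultimately periodic words, and the last claim via closure under difference. The real difference is in the denominator analysis behind Lemma~\ref{lemma:Q}.

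The paper argues by contradiction with an extremal index. It assumes a reduced denominator $v_k$ with $\mathrm{pow}_f(v_k)>\mathrm{pow}_f(bp)$, then uses finiteness of the suffix-sum set to take the maximal exponent $M$ of $f$ and the first index $k$ attaining it. If $f\mid p$, the forward step $s_{k+1}=(s_k-d_k)/\lambda$ pushes the exponent above $M$. If $f\nmid p$, the backward step $s_{k-1}=d_{k-1}+\lambda s_k$ shows that index $k-1$ already attains $M$; this index exists because $v_0=b$. So the paper uses finiteness in both cases.

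You instead treat the two kinds of primes asymmetrically. For $r\mid p$ you use a forward runaway argument, which is the only place finiteness enters. For $r\nmid p$ you use the closed form $p^is_i=q^it-\sum_{j<i}d_jp^jq^{i-j}$, which bounds $v_r(s_i)$ for every word of discounted sum $t$, whether or not its suffix sums are finite. This makes the role of each hypothesis transparent and gives the sharper conclusion $s_i\in\frac1b\Z$, hence $O(bq\mu)$ states instead of $O(bq^2\mu)$. The paper's extremal argument in fact only uses $\mathrm{pow}_f(v_k)>\mathrm{pow}_f(b)$, so its extra factor $p$ is slack rather than necessary.

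Two small remarks. First, the ``delicate point'' you anticipate is vacuous. If a prime $r$ divides both $p$ and $b$, then $v_r(s_0)=v_r(t)=-v_r(b)<0$, so your runaway argument fires already at $i=0$ and $L^{<\infty}_{[\Sigma,\lambda,t]}=\emptyset$. When the language is nonempty, your first case actually gives $v_r(s_i)\ge 0$. Second, when $|t|>\mu q$ you must take the sink as initial state, as the paper does, since then $t\notin V$.
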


Assuming that \cref{thm:MC-measure-zero,thm:omega-regularity} hold, we now explain how to complete the proof of \cref{thm:MC}.
By Theorem~\ref{thm:MC-measure-zero}, in a finite Markov chain $C$,
\(\Pr_C(L_{[\Sigma,\lambda,t]}) = \Pr_C(L^{<\infty}_{[\Sigma,\lambda,t]})\) so that one can compute  \(\Pr_C(L^{<\infty}_{[\Sigma,\lambda,t]})\) to solve the stochastic \TDS problem.
Then, by Theorem~\ref{thm:omega-regularity}, $L^{<\infty}_{[\Sigma,\lambda,t]}$ is recognized by deterministic safety automaton $\mathcal A_{[\Sigma,\lambda,t]}$ of size $O(bq^2\mu)$. 
It therefore suffices to compute $\Pr_C(L(\mathcal A_{[\Sigma,\lambda,t]}))$,
which can be done by taking the synchronous product of $C$ with $\mathcal A_{[\Sigma,\lambda,t]}$, and computing there the probability of accepting runs. The latter can be done in time polynomial in $|C| \cdot |\mathcal A_{[\Sigma,\lambda,t]}|$.

The rest of this section is thus devoted to proving \cref{thm:MC-measure-zero,thm:omega-regularity}, followed by some more results for distribution of Markov chains in~\cref{sec:app}.

\subsection{Infinite suffix-sum \TDS languages have measure zero}\label{sec:MC-result}
In this subsection we prove~\cref{thm:MC-measure-zero}. To prove~\cref{thm:MC-measure-zero}, we first prove its statement for the case where $C$ is strongly connected (Lemma~\ref{Lemma:MC-SC-2}). We then use \cref{prop:mc} to generalize this result to arbitrary finite Markov chains. To prove the statement for strongly connected Markov chains, we prove that in a finite, strongly connected Markov chain $C$ over alphabet $\Sigma$, the probability $\Pr_C(L_{[\Sigma, \lambda, t]})$ is either $0$ or $1$ (Lemma~\ref{Lemma:MC-SC-1}).
We begin by introducing the necessary notation.

\paragraph*{Notation and identities} 
Let $\Sigma \subset \R$ be a finite alphabet, $\lambda \in (0,1)$ be a discount factor, and $t \in \R$ be a target value. 
Recall that for an infinite word $w = d_0d_1\dots \in \Sigma^\omega$, its discounted sum is
$\DS(w) = \sum_{i=0}^{\infty} d_i \lambda^i.$
We overload the notation $\DS$ of discounted sums to finite words: for $x = d_0d_1\dots d_n \in \Sigma^*$, $\DS(x) = \sum_{i=0}^{n} d_i \lambda^i.$
We emphasize the following recursive property: for every word $w \in \Sigma^\omega$ and every index $n > 0$
\begin{equation}\label{eq:DS-recursive}
    \DS(w) = \DS(w[0\dots n-1]) + \lambda^n\DS(w[n\dots \infty]).
\end{equation}
Let $C = (S, \Sigma, s_{\mathsf{init}}, P, \ell)$ be a finite Markov chain. When $C$ is clear from context, for $s$ a state of $C$, we write $\Pr_s$ as a shorthand for $\Pr_{C_s}$, the probability measure in the Markov chain with initial state $s$. 
Recall that if $L \subseteq \Sigma^\omega$ is a language, the notation $\Pr_s(L)$ represents the probability of the event that a random infinite path in $C_s$ has a trace in $L$. 
Therefore, conditional probabilities such as $\Pr_s(L \mid E)$ or intersections like $\Pr_s(L \cap E)$ denote refer to operations on the corresponding sets of paths in $C_s$, rather than on the language $L$ itself.
For a finite or infinite path $\pi$, we write $\DS(\pi)$ as a shorthand for $\DS(\textit{trace}(\pi))$. Let $\hat \pi = s_0 s_1 \dots s_n$ be a finite path in $C_s$. We also introduce the shorthands $L_x$ and $L_x^\infty$ for the languages $L_{[\Sigma,\lambda,x]}$ and $L_{[\Sigma,\lambda,x]}^\infty$, respectively, whenever $\Sigma$ and $\lambda$ are clear from context.
By \cref{prop:mc} and Equation~\eqref{eq:DS-recursive}, the language $L_t$ satisfies
\begin{equation}\label{eq:claim-MC}
    \Pr_{s}(L_t \mid \mathrm{cyl}(\hat \pi)) = \Pr_{s_n}(L_{t_{\hat \pi}}), \quad \text{where} \quad  t_{\hat \pi} = (t - \DS({\hat \pi}))/\lambda^n.
\end{equation}

Analogously, consider the language $L_t^\infty$. The property of having infinitely many distinct suffix sums is a tail property. In other words, for every index $n$, an infinite word $w \in \Sigma^\omega$ has infinitely many distinct suffix sums if and only if its suffix $w[n \dots \infty]$ also does. Therefore
\begin{equation}\label{eq:claim-MC-infity}
    \Pr_{s}(L_t^\infty \mid \mathrm{cyl}(\hat{\pi})) = \Pr_{s_n}(L_{t_{\hat \pi}}^\infty), \quad \text{where} \quad  t_{\hat \pi} = (t - \DS({\hat \pi}))/\lambda^n.
\end{equation}

\paragraph*{Strongly connected Markov chains}
We now state a strong version of a 0-1 law for $L_{[\Sigma, \lambda, t]}$ in strongly connected Markov chains.
\begin{lemma}\label{Lemma:MC-SC-1}Let $\Sigma \subset \R$ be a finite alphabet, $\lambda \in (0,1)$ a discount factor, and  $t \in \R$ a target value. For every finite, strongly connected Markov chain $C$ over alphabet $\Sigma$, either $\Pr_C(L_{[\Sigma, \lambda, t]}) =0 $, or every infinite path $\pi$ in $C$ satisfies $\DS(\pi) = t$.

\end{lemma}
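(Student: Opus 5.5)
Assume $\Pr_C(L_t)>0$, where $C$ is finite and strongly connected and $t\in\R$; we show that every infinite path from $s_{\mathsf{init}}$ has discounted sum $t$. The plan is to show that positivity at one point propagates along every edge, and then to obtain exactness from compactness together with the fact that the tail contribution vanishes.

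\textbf{Step 1: a positive-probability path.} By item~2 of \cref{prop:mc} applied to $E=E^C_{L_t}$, there is a path $\pi=s_0s_1\cdots$ with $\DS(\pi)=t$ and $\Pr(L_t\mid \mathrm{Cyl}(\pi[0..n]))>0$ for every $n$. By \eqref{eq:claim-MC}, this gives $\Pr_{s_n}(L_{t_n})>0$ with $t_n=(t-\DS(\pi[0..n]))/\lambda^n$. Note that $t_n=\DS(\pi[n..])$ is a suffix sum of $\pi$, so $|t_n|\le \mu/(1-\lambda)$, where $\mu=\max|\sigma|$.

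\textbf{Step 2: the set of good pairs is closed under one step.} Call a pair $(s,x)$ \emph{good} if $\Pr_s(L_x)>0$. If $(s,x)$ is good, then by total probability over the first transition, some successor $s'$ with label $d=\ell(s,s')$ satisfies $\Pr_{s'}(L_{(x-d)/\lambda})>0$. What we actually need is stronger: for \emph{every} successor $s'$, the pair $(s',(x-d)/\lambda)$ must be good.

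This is the main obstacle. I would handle it with a potential argument. Let $m(s,x)=\Pr_s(L_x)$. For each fixed $s$ the events $L_x$ are disjoint over $x$, so $\sum_x m(s,x)\le 1$. Hence there are only countably many good $x$ for each $s$, and the maximum $M=\sup_{s,x} m(s,x)$ is attained, since only finitely many pairs have mass above any positive level.

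Take a pair $(s,x)$ with $m(s,x)=M$. The identity $m(s,x)=\sum_{s'}P(s,s')\,m(s',(x-\ell(s,s'))/\lambda)$ expresses $M$ as a convex combination of values that are each at most $M$. So every successor attains $M$ as well. Iterating this, every finite path from $s$ leads to pairs of mass exactly $M>0$. By strong connectivity, every state $s'$ is reached, so from every state there exists some $x$ with mass $M$.

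\textbf{Step 3: exactness from the initial state.} Starting from $s_{\mathsf{init}}$, the pair $(s_{\mathsf{init}},t)$ is good, but it may not have mass $M$, so I cannot apply Step 2 to it directly. I need to transfer the maximality back to $(s_{\mathsf{init}},t)$.

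To do so, use the path from Step 1. Along almost every path in $L_t$, the Lévy 0-1 law gives $m(s_n,t_n)\to 1$, because the conditional probability of $L_t$ given the prefix tends to the indicator of $L_t$. Therefore $M=1$.

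Now, with probability 1, a path $\pi$ in $L_t$ visits a pair $(s_n,t_n)$ of mass $1$. From that point, by Step 2, every continuation of $\pi[0..n]$ has discounted sum $t$.

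Finally, strong connectivity lets me go back to the start. From $s_n$ there is a finite path back to $s_{\mathsf{init}}$, and every pair reached along it still has mass $1$. So $(s_{\mathsf{init}},x')$ has mass $1$ for some $x'$.

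It remains to show that $x'=t$. Since $\Pr(L_t)>0$ and $\Pr(L_{x'})=1$, and the events $L_t$ and $L_{x'}$ are disjoint unless $t=x'$, we must have $x'=t$. Therefore $m(s_{\mathsf{init}},t)=1$, and iterating the convex-combination argument of Step 2 from $(s_{\mathsf{init}},t)$ shows that every finite path from $s_{\mathsf{init}}$ leads to a pair of mass $1$.

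This gives exactness for every infinite path $\pi$, not only for almost every one. Along $\pi$, each prefix of length $n$ leads to a pair $(s_n,x_n)$ of mass $1$. Such a pair is good, so $x_n$ is the discounted sum of some infinite word, and therefore $|x_n|\le\mu/(1-\lambda)$. Since $t=\DS(\pi[0..n-1])+\lambda^n x_n$, we get $|t-\DS(\pi[0..n-1])|\le\lambda^n\mu/(1-\lambda)\to 0$. Hence $\DS(\pi)=t$.
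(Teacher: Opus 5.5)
Your proof is correct, and its core is the paper's. Your Step~2 is exactly the paper's Claims~1 and~2: the maximal mass $M$ is attained, it propagates to every successor by the convex-combination identity, and it reaches every state by strong connectivity. Your final paragraph is the paper's closing argument stated directly rather than by contradiction. The paper assumes a path $\pi_0$ with $\DS(\pi_0)=v'\neq v_{s_{\mathsf{init}}}$ and derives $|x_n-y_n|=|v_{s_{\mathsf{init}}}-v'|/\lambda^n>W$.

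Where you differ is Step~3, and there the detour through Step~1 and L\'evy's 0-1 law is unnecessary. The obstacle you identify, that $(s_{\mathsf{init}},t)$ need not have mass $M$, disappears if you do not insist on working with $t$. Step~2 already gives some $x'$ with $m(s_{\mathsf{init}},x')=M>0$. Iterating the convex combination, every prefix of every path from $s_{\mathsf{init}}$ leads to a pair of mass $M>0$. Your last paragraph uses only that these pairs are good, so it already shows $\DS(\pi)=x'$ for every path $\pi$. Then $\Pr_{s_{\mathsf{init}}}(L_{x'})=1$ and $\Pr_{s_{\mathsf{init}}}(L_t)>0$ force $x'=t$ by disjointness. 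This is precisely how the paper concludes, with $x'=v_{s_{\mathsf{init}}}$.

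The paper's route is therefore more elementary: it needs no martingale convergence and obtains $M=1$ as a by-product rather than as an input. Your argument is valid, but it first proves $M=1$ and then transfers it back to $s_{\mathsf{init}}$ along a return path. That transfer is also redundant, since Step~2 already provides a target of mass $M$ at every state, including $s_{\mathsf{init}}$.
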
 
\begin{proof}
Let $C = (S, \Sigma, s_{\mathsf{init}}, P, \ell)$ be a finite, strongly connected Markov chain.
For a state $s \in S$, define the supremum
$p^*_s = \sup_{x \in \R} \Pr_{s}(\lang{x}).$
We first prove that supremum $p_s^*$ is attained; that is, there exists a $v_{s} \in \R$ such that $\Pr_{s}(\lang{v_s}) = p^*_s$: If $p^*_s = 0$, then any target real number attains the supremum $p_s^*$. If $p^*_s > 0$, then there exists a real number $x_0$ such that $\Pr_{s}(\lang{x_0}) > 0$. Because the non-empty languages $L_x$ and $L_y$ are disjoint for every $x \neq y \in \R$, the sum of probabilities $\Pr_{s}(\lang{x})$ over all $x\in \R$ sum to at most $1$. Thus, at most $\frac{1}{\Pr_{s}(\lang{x_0})}$ many real numbers $x$ satisfy $\Pr_{s}(L_x) \ge \Pr_{s}(\lang{x_0})$. Consequently, the supremum $p_s^*$ is a maximum taken over a non empty finite set, and thus it is attained. Let $v_s \in \R$ be such that 
\begin{equation}\label{eq:max_attained}
    \Pr_{C_{s}}(\lang{v_s}) = p^*_s.
\end{equation}
Define $p^* = \max_{s \in S}p^*_s.$ We next show that for every state $s \in S$, $p^*_s = p^*$. To do so, we establish the following claim:
\begin{claim}\label{claim:p^*-1}
    For every state $s \in S$ and target number $x \in \R$ such that $\Pr_{s}(\lang{x}) = p^*$, for every finite path $\hat \pi = s_0s_1 \dots s_n$ in $C_s$ of length $n+1 \geq 0$,
    $\Pr_{{s_n}}(\lang{x_{\hat \pi}}) = p^*,$
    where $x_{\hat \pi} = \frac{x - \DS(\hat \pi)}{\lambda^n}.$
\end{claim}
\begin{proof}[Proof of claim] Fix an arbitrary path length $n+1 \geq 0$. The law of total probability gives:
\(
    \Pr_{s}(\lang{x}) =
    \sum_{\hat \pi =s_0s_1\dots s_n \in  \FPaths(C_s)} \Pr_{s}( \mathrm{cyl}(\hat \pi)) \cdot \Pr_{s}(\lang{x} \mid \mathrm{cyl}(\hat \pi)).
\)
Note that the sum is taken over only finite paths of length $n+1$.
Substituting Equation~\eqref{eq:claim-MC}, we obtain:
\(
    \Pr_{s}(\lang{x}) =
    \sum_{\hat \pi =s_0s_1\dots s_n \in  \FPaths(C_s)} \Pr_{s}( \mathrm{cyl}(\hat \pi)) \cdot \Pr_{{s_n}}(\lang{x_{\hat \pi}}).
\)
where $x_{\hat \pi}$ is as defined in \cref{claim:p^*-1}.

By definition of $p^*$, for every finite path $\hat \pi$ of length $n+1$,  $\Pr_{{s_n}}(\lang{x_{\hat \pi}}) \leq p^*$. Because the cylinder probabilities $\Pr_{s}( \mathrm{cyl}(\hat \pi))$ are positive and sum to $1$, the right-hand side of the above equation is a convex combination of terms bounded by $p^*$. Consequently, the assumption $\Pr_{s}(\lang{x}) = p^*$ implies that every term $\Pr_{{s_n}}(\lang{x_{\hat \pi}})$ is equal to $p^*$.
\end{proof}
\begin{claim}\label{claim:p^*-2}
    For every state $s \in S$, $p^*_s = p^*$.
\end{claim}
\begin{proof}[Proof of claim]
    Recall that $p^* = \max_{s \in S}p^*_s$. Let $u \in S$ be a state such that $p_{u}^* = p^*$. By~\eqref{eq:max_attained}, $\Pr_{{u}}(\lang{v_u}) = p^*$. Because the Markov chain $C$ is strongly connected, for every state $s\in S$ there is a finite path in $C_{u}$ of positive length ending at $s$. By the preceding claim, this path guarantees the existence of a target number $x_0 \in \R$ such that $\Pr_{s}(\lang{x_0}) = p^*$.
\end{proof}

We now prove the lemma. Let $t \in \R$ be such that $\Pr_C(L_{[\Sigma, \lambda, t]}) >0$, or simply
\begin{equation}
    \label{eq:lem4-assm}
    \Pr_{s_{\textsf{init}}}(\lang{t}) > 0.
\end{equation}
We wish to show that every infinite path $\pi$ in $C$ satisfies $\DS(\pi) = t$. To do this we will show that every infinite path $\pi$ in $C$ satisfies $\DS(\pi)=v_{s_{\mathsf{init}}}$. Since $\Pr_{s_{\textsf{init}}}(\lang{t}) > 0$, this implies $t = v_{s_{\mathsf{init}}}$, which would conclude the proof. 

Assume for contradiction that there exists an infinite path $\pi_0 = s_0s_1\dots$ in $C$ such that  $v' = \DS(\pi_0) \neq v_{s_{\mathsf{init}}}$. 
For index $n \ge 0$, let $y_n := \DS(\pi_0[n\dots\infty])$. The recursive property of discounted sums (Equation~\eqref{eq:DS-recursive}) gives:
$y_n = \frac{v' - \DS(\pi_0[0\dots n])}{\lambda^n}.$
By Claim~\ref{claim:p^*-2} and Equation~\eqref{eq:max_attained} $\Pr_{s_{\textsf{init}}}(\lang{v_{s_{\mathsf{init}}}}) = p^*$, thus for the finite path $\pi_0[0\dots n]$ of $C_{s_{\textsf{init}}}$, Claim~\ref{claim:p^*-1} implies that for
$x_n := \frac{v_{s_{\mathsf{init}}} - \DS(\pi_0[0\dots n])}{\lambda^n},$
one has $\Pr_{{s_n}}(\lang{x_n}) = p^*$. By Equation~\eqref{eq:lem4-assm}, $p^* > 0$. Hence, $\Pr_{{s_n}}(\lang{x_n})>0$  and thus there exists an infinite path $\pi_n$ in $C_{s_n}$ such that $\DS(\pi_n) = x_n$. We derive a contradiction for sufficiently large $n$. 
Let $d_{\min} = \min_{d \in \Sigma} d$ and $d_{\max} = \max_{d\in \Sigma} d$. The absolute difference between the discounted sums of every two infinite paths in a Markov chain over alphabet $\Sigma$ is bounded by:
$W = \sum_{i=0}^{\infty} \lambda^i (d_{\max} - d_{\min}) = \frac{d_{\max} - d_{\min}}{1 - \lambda}.$
Thus for every $n\geq 0$, the definitions of $\pi_n$, $x_n$ and $y_n$ guarantee
\begin{equation}\label{eq:ab-val-MC}
    |\DS(\pi_n) -\DS(\pi_0[n\dots\infty])| = |x_n -y_n| = \frac{|v_{s_{\mathsf{init}}} - v'|}{\lambda^n}\leq W
\end{equation}
Because $0 <\lambda < 1$ and $v_{s_{\mathsf{init}}} \neq v'$, the sequence $\frac{|v_{s_{\mathsf{init}}} - v'|}{\lambda^n}$ diverges as $n \to \infty$. Thus, there exists a sufficiently large positive integer $n$ such that
\(\frac{|v_{s_{\mathsf{init}}} - v'|}{\lambda^n}> W,\)
contradicting Equation~\eqref{eq:ab-val-MC}. Therefore, our assumption that $\pi_0$ exists is false, proving the lemma. 
\end{proof}

\begin{lemma}\label{Lemma:MC-SC-2}
Let $\Sigma \subset \R$ be a finite alphabet, $\lambda \in (0,1)$ a discount factor, and  $t \in \R$ a target value. For every finite, strongly connected Markov chain $C$ over alphabet $\Sigma$, $\Pr_C(L^\infty_{[\Sigma,\lambda,t]}) = 0.$
\end{lemma}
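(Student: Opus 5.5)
We argue by contradiction. Suppose $\Pr_C(L^\infty_t) > 0$ for some strongly connected $C$. Since $L^\infty_t \subseteq L_t$, also $\Pr_C(L_t) > 0$. Lemma~\ref{Lemma:MC-SC-1} then says that every infinite path $\pi$ of $C$ satisfies $\DS(\pi) = t$. We will show that this global rigidity forces every path to have only finitely many distinct suffix sums. That makes $L^\infty_t$ disjoint from the set of traces of paths of $C$, so $\Pr_C(L^\infty_t)=0$, a contradiction.

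The key step is to transport the rigidity to every state. For each state $s$, let $\hat\pi$ be a finite path from $s_{\mathsf{init}}$ to $s$, which exists by strong connectivity. Every infinite path $\pi'$ of $C_s$ yields an infinite path $\hat\pi\odot\pi'$ of $C$. By the previous paragraph, $\DS(\hat\pi\odot\pi')=t$. By Equation~\eqref{eq:DS-recursive}, $\DS(\pi')$ is then equal to $t_{\hat\pi} = (t-\DS(\hat\pi))/\lambda^{n}$, a value that does not depend on $\pi'$. So every infinite path from $s$ has the same discounted sum; call it $c_s$. It is well defined because some infinite path from $s$ exists, and it does not depend on the choice of $\hat\pi$, since it equals the sum of any path from $s$.

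Now take any infinite path $\pi = s_0s_1\cdots$ of $C$. For each $i$, the suffix $\pi[i..]$ is an infinite path of $C_{s_i}$, so $\DS(\pi[i..]) = c_{s_i}$. The set of suffix sums is therefore contained in $\{c_s \mid s\in S\}$, which is finite because $S$ is finite. Hence no path of $C$ has a trace in $L^\infty_t$. This contradicts $\Pr_C(L^\infty_t)>0$, and the lemma follows.

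I do not expect a real obstacle here. The only point to check is the well-definedness of $c_s$, namely that it does not depend on the chosen connecting path $\hat\pi$. This is immediate from the intrinsic definition of $c_s$ as the common value of $\DS$ over all infinite paths from $s$. If one prefers to avoid the contradiction structure, the argument can be written as a case split on whether $\Pr_C(L_t)=0$; in that case the bound $\Pr_C(L^\infty_t)\le\Pr_C(L_t)$ alone suffices.
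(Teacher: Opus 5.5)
Your proof is correct, and it takes a somewhat different, more elementary route than the paper. Both arguments rest on \cref{Lemma:MC-SC-1}; the difference is in how that lemma is used.

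The paper also argues by contradiction, but it proceeds as follows:
\begin{enumerate}
\item It invokes \cref{prop:mc}(2) to select a path $\pi$ whose trace lies in $L^\infty_t$ and whose every prefix cylinder carries positive conditional probability.
\item Via the tail identity \eqref{eq:claim-MC-infity}, this gives $\Pr_{s_n}(L_{t_n})>0$ with $t_n=\DS(\pi[n..])$ at every position $n$.
\item It applies \cref{Lemma:MC-SC-1} separately in each $C_{s_n}$.
\item It finishes with a pigeonhole argument: some state $s_k=s_m$ is revisited with $t_k\neq t_m$, yet \cref{Lemma:MC-SC-1} forces all paths of $C_{s_k}$ to share a single discounted sum.
\end{enumerate}

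You instead apply \cref{Lemma:MC-SC-1} only once, at the initial state. You then transport the rigidity to every state combinatorially, by prefixing a connecting finite path and using \eqref{eq:DS-recursive}. This removes the need for \cref{prop:mc}(2) and for conditional probabilities altogether. It also yields a stronger conclusion: whenever $\Pr_C(L_t)>0$, the discounted sum is a function $s\mapsto c_s$ of the current state. Hence every path of $C$ has at most $|S|$ distinct suffix sums, and the event $E^C_{L^\infty_t}$ is empty, not merely null.

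In exchange, the paper's version follows the same template of choosing a positive-probability path and propagating along it, which it reuses in the proof of \cref{thm:MC-measure-zero}. At its core, its pigeonhole step is the same observation as your well-definedness of $c_s$.
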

\begin{proof}
Let $C = (S, \Sigma, s_{\mathsf{init}}, P, \ell)$ be a finite, strongly connected Markov chain. Assume for contradiction that $\Pr_{s_{\textsf{init}}}(\langinf{t}) > 0$.  
By \cref{prop:mc}, there exists a path $\pi= s_0s_1\ldots$ in $C$ such that $\trace(\pi) \in \langinf{t}$ and for every $n \ge 0$, the conditional probability satisfies:
$\Pr_C(\langinf{t} \mid \mathrm{Cyl}(\pi[0\dots n])) > 0.$
Substituting Equation~\eqref{eq:claim-MC-infity} gives:
\(
    \Pr_{{s_n}}(\langinf{t_n}) > 0, \quad \text{where } t_{n} = \frac{t - \DS(\pi[0\dots n])}{\lambda^n}.
\)
Because $\langinf{t_n} \subseteq \lang{t_n}$, it follows that for every $n \geq 0$:
\begin{equation}\label{eq:>0lemma-MC-2} \Pr_{{s_n}}(\lang{t_n}) > 0.\end{equation}
Because $C$ is strongly connected, the Markov chain $C_{s_n}$ is also strongly connected. Thus, by Lemma~\ref{Lemma:MC-SC-1}, every path in $C_{s_n}$ has discounted sum $t_n$.

Since $\trace(\pi) \in \langinf{t}$, by the recursive property of discounted sums (Equation~\eqref{eq:DS-recursive}) observe that $t_{n} = \DS(\pi[n \dots\infty])$ and hence $\{t_n \mid n \geq 0\}$ is infinite.
Because the sequence of states $s_n$ is drawn from the finite set $S$, while the set of values $\{t_n \mid n \geq 0\}$ is infinite, the pigeonhole principle dictates that there exist indices $k ,m \geq 0$ with $k \neq m$ such that $s_k = s_m$ but $t_k \neq t_m$. 
However, $s_k = s_m$ implies that the Markov chains $C_{s_k}$ and $C_{s_m}$ are identical. Moreover, because $C$ is strongly connected, $C_{s_k}$ and $C_{s_m}$ are also strongly connected. As established in Equation~\eqref{eq:>0lemma-MC-2} both $\Pr_{{s_k}}(\lang{t_k}) > 0$ and $\Pr_{{s_m}}(\lang{t_m}) > 0$. By Lemma~\ref{Lemma:MC-SC-1}, all paths in this Markov chain must evaluate to a single, unique discounted sum, which forces $t_k =t_m$. This contradicts the choice of $k$ and $m$, completing the proof.
\end{proof}

\begin{proof}[Proof of Theorem~\ref{thm:MC-measure-zero}]  Using \cref{Lemma:MC-SC-2} we now prove \cref{thm:MC-measure-zero}.
Let $C= (S, \Sigma, s_{\mathsf{init}}, P, \ell)$ be a finite Markov chain. Assume for contradiction that $\Pr_{s_{\textsf{init}}}(\langinf{t}) > 0$.   
By \cref{prop:mc}, $\Pr_C(E_{\BSCC}(C)) = 1$. Thus, \(\Pr_{s_{\textsf{init}}}\big(\langinf{t} \cap E_{\BSCC}(C)\big) = \Pr_{s_{\textsf{init}}}\big(\langinf{t}\big)> 0.\)
By \cref{prop:mc}, there exists a path $\pi = s_0s_1\ldots \in E_{\BSCC}(C)$ such that $\trace(\pi) \in \langinf{t}$ and for every $n \ge 0$: $\Pr_{s_{\textsf{init}}}\big(\langinf{t} \cap E_{\BSCC}(C) \mid \mathrm{Cyl}(\pi[0\dots n])\big) > 0.$
Therefore $\Pr_{s_{\textsf{init}}}(\langinf{t} \mid \mathrm{Cyl}(\pi[0\dots n])) > 0.$ Substituting Equation~\eqref{eq:claim-MC-infity} gives:
\begin{equation}\label{eq:thm-MC-1}
    \Pr_{{s_n}}(\langinf{t_n}) > 0, \quad \text{where } t_{n} = \frac{t - \DS(\pi[0\dots n])}{\lambda^n}.
\end{equation}
Because $\pi \in E_{\BSCC}(C)$, the path eventually reaches a BSCC $B$, that is there exists an index $m > 0$ such that $s_m \in B$. 
However, this results in a contradiction. The Markov chain $C_{s_m}$ induces the same probability space as the finite, strongly connected Markov chain $C' = (B, \Sigma, s_m, P', \ell)$, where $P'$ is the projection of $P$ onto the domain $B \times B$. Thus, Lemma~\ref{Lemma:MC-SC-2} guarantees that $\Pr_{C_{s_m}}(\langinf{t_m})  = \Pr_{C'}(\langinf{t_m}) = 0$. This contradicts Equation~\eqref{eq:thm-MC-1} for index $m$, completing the proof.
\end{proof}

\subsection{Omega-regularity characterization of \TDS languages}\label{sec:Automata}

We now prove Theorem~\ref{thm:omega-regularity} that provides a characterization of $\omega$-regularity of \TDS languages. To do this, we first present a construction of a deterministic safety automaton $\mathcal{A}_{[\Sigma,\lambda,t]}$ (Definition~\ref{def:automata}) and show in Lemma~\ref{lemma:Automata-correctness} that when $\Sigma \subset \Z$ is finite, $\lambda \in (0,1) \cap \Q$, and $t \in \Q$, the language of $\mathcal{A}_{[\Sigma,\lambda,t]}$ is equal to the language $L^{<\infty}_{[\Sigma,\lambda,t]}$. 

Let $\Sigma \subset \Z$ be a finite alphabet, $\lambda \in (0,1) \cap\Q$ a discount factor, and $t \in \Q$ a target value. Intuitively, the states of the automaton $\mathcal{A}_{[\Sigma,\lambda,t]}$ store, after reading a finite word $d_0d_1\dots d_n$, the current \emph{required suffix-sum}, i.e., the discounted sum that the remaining suffix $d_{n+1}d_{n+2}\dots$ must have for the discounted sum of the entire word $d_0d_1d_2\dots$ to be equal to $t$. Formally, a state stores a unique value $q \in \Q$ that satisfies
\(
t
=
\sum_{i=0}^{n} d_i \lambda^i
+
\lambda^{n+1} q.
\)
Thus locally, if the current required suffix sum is $q$, then after reading a symbol $d \in \Sigma$, the remaining suffix must have discounted sum $q'$ where
$q = d + \lambda q'$. For example, if $\lambda=\frac12$ and $t=2$, then after reading the prefix $d_0=1, d_1=2$, the automaton stores the value $0$. 
Using required suffix-sums as automaton states has appeared before~\cite{BCV22,bertrand2026categorizerautomatadiscountedsumpayoffs}.
However, when the discount factor $\lambda$ is a rational that is not an inverse integer, the set of required suffix-sums that an automaton may need to track is potentially infinite, preventing a finite-state construction~\cite{BCV22}. The finiteness condition in the definition of $L^{<\infty}_{[\Sigma,\lambda,t]}$ allows one to resolve this difficulty.  

The following lemma establishes that only finitely many distinct suffix-sums are required after reading prefixes of words in $L^{<\infty}_{[\Sigma,\lambda,t]}$. The proof of the lemma is provided in Appendix~\ref{app:proof} and is an adaptation of the denominator analysis of ``gaps'' in Boker et al.~\cite{boker_target_2015}. 
However, where Boker et al. utilize this analysis primarily as an intermediate step towards the decidability of the \TDS problem in special cases, we identify the language of finite suffix-sums and give a deterministic safety automaton recognizing it.

\begin{lemma}[Finiteness of Required Suffix-Sums]\label{lemma:Q}
Let $\Sigma \subset \Z$ be a finite alphabet, $\lambda \in (0,1) \cap\Q$ a discount factor, and $t \in \Q$ a target value.
There exists a finite set $Q_{[\Sigma,\lambda,t]} \subset \Q$ such that for every word $w = d_0 d_1 \dots \in L^{<\infty}_{[\Sigma,\lambda,t]}$ and for every index $i \ge 0$:
\(
\DS(w[i\dots\infty]) \in Q_{[\Sigma,\lambda,t]}.
\)
Specifically, let $\lambda = \frac{p}{q}$ and $t = \frac{a}{b}$, in reduced form, where $p,q,b\in \mathbb{Z}^+$ and $a \in \mathbb{Z}$. 
Defining $d_{\min} = \min_\Sigma \sigma$ and $d_{\max} = \max_\Sigma \sigma$, the set $Q_{[\Sigma,\lambda,t]}$ is defined as
\[
Q_{[\Sigma,\lambda,t]} := \left\{ \frac{x}{bp} \;\middle|\; x \in \mathbb{Z} \text{ and } \frac{d_{\min}}{1-\lambda} \le \frac{x}{bp} \le \frac{d_{\max}}{1-\lambda} \right\}.
\]
\end{lemma}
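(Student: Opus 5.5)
The plan is to track the suffix sums $y_i := \DS(w[i\dots\infty])$ along a word $w = d_0d_1\dots \in L^{<\infty}_{[\Sigma,\lambda,t]}$, using only the recursion obtained from Equation~\eqref{eq:DS-recursive} with $n=1$:
\[
y_i = d_i + \lambda y_{i+1}, \qquad\text{equivalently}\qquad y_{i+1} = \frac{q\,(y_i - d_i)}{p},
\]
starting from $y_0 = t = a/b$. Membership in $Q_{[\Sigma,\lambda,t]}$ then splits into two independent parts: the interval bounds and the denominator.

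\textbf{Interval bounds.} These are immediate. Each $y_i$ is the discounted sum of an infinite word over $\Sigma$, so
\[
\sum_{j\ge 0}\lambda^j d_{\min} \;\le\; y_i \;\le\; \sum_{j\ge 0}\lambda^j d_{\max},
\]
that is, $d_{\min}/(1-\lambda) \le y_i \le d_{\max}/(1-\lambda)$. Once the denominator claim is established, finiteness of $Q_{[\Sigma,\lambda,t]}$ is clear: it consists of the multiples of $1/(bp)$ in a bounded interval.

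\textbf{Denominator: rescaling.} It suffices to show that every $y_i$ lies in $\frac{1}{b}\Z \subseteq \frac{1}{bp}\Z$. I normalise by setting $z_i := b\,y_i$. Then $z_0 = a \in \Z$ and
\[
z_{i+1} = \frac{q\,(z_i - b d_i)}{p}.
\]
By induction, every $z_i$ is a rational whose denominator involves only primes dividing $p$. Indeed, $b d_i$ is an integer and multiplication by $q$ introduces no new denominator.

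\textbf{Denominator: valuation argument (the main step).} Fix a prime $r \mid p$ and consider the $r$-adic valuation $\nu_r$. Suppose $\nu_r(z_i) < 0$ for some $i$. Since $b d_i \in \Z$ has $\nu_r \ge 0$, the ultrametric property gives $\nu_r(z_i - b d_i) = \nu_r(z_i)$. Since $\gcd(p,q)=1$, we also have $\nu_r(q) = 0$. Therefore
\[
\nu_r(z_{i+1}) = \nu_r(z_i) - \nu_r(p) \le \nu_r(z_i) - 1 < 0 .
\]
Iterating this, $\nu_r(z_j) \to -\infty$ as $j\to\infty$. The values $z_j$ then have unboundedly large denominators, so infinitely many of them are distinct, and hence so are the $y_j$. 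This contradicts $w \in L^{<\infty}_{[\Sigma,\lambda,t]}$.

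Consequently $\nu_r(z_i) \ge 0$ for every $i$ and every prime $r \mid p$. Combined with the rescaling step, this shows every $z_i$ is an integer, so every $y_i \in \frac{1}{b}\Z$. Together with the interval bounds, $y_i \in Q_{[\Sigma,\lambda,t]}$.

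The only delicate points are the coprimality of $p$ and $q$, which ensures $\nu_r(q)=0$, and the ultrametric step, which uses that $b d_i$ is an integer. The remainder is routine bookkeeping.
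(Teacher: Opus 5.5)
Your proof is correct, but it is organized differently from the paper's. The paper argues by contradiction on the reduced denominators $v_i$ of the suffix sums. It uses finiteness to fix, for an offending prime $f$, the maximal exponent $M$ over all $i$ and the least index $k$ attaining it, and then splits into two cases:
\begin{itemize}
\item if $f \mid p$, the forward step $y_{k+1} = q(y_k - d_k)/p$ pushes the exponent above $M$;
\item if $f \nmid p$, the backward step $y_{k-1} = d_{k-1} + \lambda y_k$ shows that index $k-1$ also attains $M$, using $v_0 = b$ to ensure $k \ge 1$.
\end{itemize}

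Your rescaling $z_i = b\,y_i$ disposes of the primes not dividing $p$ by a plain forward induction that does not use finiteness at all. For primes $r \mid p$, your descent $\nu_r(z_{i+1}) \le \nu_r(z_i) - 1$ is the mechanism of the paper's first case. You apply it as soon as $\nu_r(z_i) < 0$ rather than only at a maximal exponent, so finiteness is needed only in the weak form that the $z_j$ cannot be pairwise distinct.

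This buys a shorter argument with no extremal choice and no backward step. It also gives a sharper conclusion: every suffix sum lies in $\frac{1}{b}\Z$, not merely in $\frac{1}{bp}\Z$. Since the correctness proof of $\mathcal{A}_{[\Sigma,\lambda,t]}$ only needs some finite set containing all suffix sums, your bound would shrink the automaton to $O(bq\mu)$ states instead of the paper's $O(bq^2\mu)$. The paper's formulation handles all primes inside a single contradiction, at the cost of the extra backward case and a looser bound.
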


We now formalize the construction of the deterministic safety automaton $\mathcal{A}_{[\Sigma,\lambda,t]}$. Its state space is $Q_{[\Sigma,\lambda,t]} \cup \{\bot\}$, with the intuition that a state in $Q_{[\Sigma,\lambda,t]}$ represents the required suffix sum, whereas the state $\bot$ indicates that the required suffix sum does not belong to $Q_{[\Sigma,\lambda,t]}$. In the latter case, \cref{lemma:Q} guarantees that no suffix may result in a word in $L^{<\infty}_{[\Sigma,\lambda,t]}$, thus making $\bot$ an unsafe state for $\mathcal{A}_{[\Sigma,\lambda,t]}$.

\begin{definition}[Deterministic Safety Automaton $\mathcal{A}_{[\Sigma,\lambda,t]}$]\label{def:automata}
$\mathcal{A}_{[\Sigma,\lambda,t]} := (Q, \Sigma,  q_0, \delta, F)$ is the deterministic safety automaton defined as follows. $Q = Q_{[\Sigma,\lambda,t]} \cup \{\bot\}$ is the set of states;  $q_0 = t$ is the initial state (if $t \notin Q_{[\Sigma,\lambda,t]}$ then $q_0 = \bot$); $F = Q_{[\Sigma,\lambda,t]}$ is the set of safe states; and  $\delta : Q \times \Sigma \to Q$ is the transition function defined by: for $d \in \Sigma$, $\delta(\bot, d) = \bot$ and for all $z \in Q_{[\Sigma,\lambda,t]}$: $\delta(z, d)$ is equal to $\frac{z - d}{\lambda}$ if $\frac{z - d}{\lambda} \in Q_{[\Sigma,\lambda,t]}$ and $\bot$ otherwise.

\end{definition}
The correctness of this construction, formalized below, is proved in \cref{app:proof}.
\begin{lemma}[Correctness]\label{lemma:Automata-correctness}
Let $\Sigma \subset \Z$ be a finite alphabet, $\lambda \in (0,1) \cap\Q$ a discount factor, and $t \in \Q$ a target value. Then
\(
L(\mathcal{A}_{[\Sigma,\lambda,t]}) = L^{<\infty}_{[\Sigma,\lambda,t]}.
\)
\end{lemma}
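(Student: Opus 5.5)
We prove the two inclusions separately, relying on \cref{lemma:Q} and on the invariant maintained by the transition function. The invariant is the following. Let $w = d_0d_1\cdots$ and let $z_0 = t$, $z_{n+1} = (z_n - d_n)/\lambda$. By induction on $n$ using the recursion $z = d + \lambda z'$, we get $t = \DS(w[0..n-1]) + \lambda^n z_n$. Comparing with Equation~\eqref{eq:DS-recursive}, $\DS(w) = t$ holds iff $z_n = \DS(w[n..\infty])$ for some (equivalently, every) $n$. By construction of $\delta$, the run of $\mathcal{A}_{[\Sigma,\lambda,t]}$ on $w$ is $z_0z_1\cdots$ as long as every $z_n$ lies in $Q_{[\Sigma,\lambda,t]}$. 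Once some $z_n$ leaves $Q_{[\Sigma,\lambda,t]}$, the run enters $\bot$ and stays there forever.

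For the inclusion $L^{<\infty}_{[\Sigma,\lambda,t]} \subseteq L(\mathcal{A}_{[\Sigma,\lambda,t]})$, take $w$ in the left-hand side. Since $\DS(w)=t$, the invariant gives $z_n = \DS(w[n..\infty])$ for all $n$. By \cref{lemma:Q}, each such suffix sum lies in $Q_{[\Sigma,\lambda,t]}$. An induction on $n$ then shows that the run never reaches $\bot$, so $w$ is accepted.

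For the converse inclusion, let $w$ be accepted, so that $z_n \in Q_{[\Sigma,\lambda,t]}$ for all $n$. Every element of $Q_{[\Sigma,\lambda,t]}$ lies in the interval $[d_{\min}/(1-\lambda), d_{\max}/(1-\lambda)]$, so the sequence $(z_n)$ is bounded. From $t - \DS(w[0..n-1]) = \lambda^n z_n$, the right-hand side tends to $0$ as $n\to\infty$, and the left-hand side tends to $t-\DS(w)$. Hence $\DS(w)=t$. The invariant then gives $\DS(w[n..\infty]) = z_n \in Q_{[\Sigma,\lambda,t]}$ for all $n$. Since $Q_{[\Sigma,\lambda,t]}$ is finite (integers divided by $bp$ in a bounded interval), the set of suffix sums is finite, so $w \in L^{<\infty}_{[\Sigma,\lambda,t]}$.

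The only delicate point is the limit argument in the converse direction, where boundedness of the states is what forces $\DS(w)=t$. Everything else is bookkeeping of the invariant, and the real content lies in \cref{lemma:Q}, which we are allowed to assume.
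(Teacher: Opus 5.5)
Your proof is correct and follows essentially the same route as the paper. For $L^{<\infty}_{[\Sigma,\lambda,t]} \subseteq L(\mathcal{A}_{[\Sigma,\lambda,t]})$ you combine Lemma~\ref{lemma:Q} with the transition identity $z_n = d_n + \lambda z_{n+1}$, and for the converse you use boundedness of $Q_{[\Sigma,\lambda,t]}$ to make the remainder term $\lambda^n z_n$ vanish. The only cosmetic difference is that you unroll the recursion once from $t$ and then recover each suffix sum from $\DS(w)=t$ via Equation~\eqref{eq:DS-recursive}, whereas the paper unrolls it from every index $i$ to get $q_i=\DS(w[i..\infty])$ directly.
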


\begin{proof}[Proof of Theorem~\ref{thm:omega-regularity}]
    The $\omega$-regularity of language $L^{<\infty}_{[\Sigma,\lambda,t]}$ follows from Lemma~\ref{lemma:Automata-correctness}. 
     In \cref{app:proof} we establish that the deterministic safety automaton $\mathcal{A}_{[\Sigma,\lambda,t]}$ has a pseudo-polynomial size.
    We here show that $L^\infty_{[\Sigma,\lambda,t]}$ is not $\omega$-regular unless it is empty. 
    Assume for the sake of contradiction that $L^\infty_{[\Sigma,\lambda,t]}$ is non-empty and $\omega$-regular. A fundamental property of $\omega$-regular languages is that every non-empty $\omega$-regular language contain at least one ultimately periodic word~\cite{DBLP:books/daglib/0016866}. Thus, there exists a word $w \in L^\infty_{[\Sigma,\lambda,t]}$ of the form $w = u v^\omega$, where $u \in \Sigma^*$ and $v \in \Sigma^+$. Since an ultimately periodic word has finitely many distinct suffixes, the set of suffix sums of $w$ is finite, which contradicts the fact that $w \in L^\infty_{[\Sigma,\lambda,t]}$. Finally since $L^\infty_{[\Sigma,\lambda,t]} = L_{[\Sigma,\lambda,t]} \setminus L^{<\infty}_{[\Sigma,\lambda,t]}$, by the fact that $\omega$-regular languages are closed under set difference we get $L_{[\Sigma,\lambda,t]}$ is $\omega$-regular iff $L_{[\Sigma,\lambda,t]} = L^{<\infty}_{[\Sigma,\lambda,t]}$.
\end{proof}

\subsection{Finer analysis of the discounted-sum distribution in Markov chains}\label{sec:app}
Beyond answering the stochastic \TDS problem, our techniques permit a finer analysis of the distribution of discounted-sum payoffs in finite Markov chains. Let $\Sigma\subset\Z$ be finite, $\lambda=p/q\in(0,1)\cap\Q$
in reduced form, $\mu=\max_{\sigma\in\Sigma}|\sigma|$, and
$C$ a finite Markov chain over $\Sigma$.
\begin{theorem}\label{thm:atoms}
It is decidable in time polynomial in $|C| \cdot \log q \cdot \log(1+\mu)$ whether there exists $t\in\R$ such that $\Pr_C(L_{[\Sigma,\lambda,t]})>0$.
\end{theorem}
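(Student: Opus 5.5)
We decide whether the discounted-sum distribution has an atom by reducing the question to a purely structural test on the bottom SCCs of $C$. The claim to prove is: there exists $t$ with $\Pr_C(L_{[\Sigma,\lambda,t]})>0$ if and only if some BSCC $B$ of $C$ that is reachable from $s_{\textsf{init}}$ is \emph{constant}, meaning there is a state $s\in B$ such that every infinite path of $C_s$ inside $B$ has the same discounted sum.

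\textbf{The equivalence.} For ($\Leftarrow$), let $\hat\pi$ be a finite path from $s_{\textsf{init}}$ to such a state $s$, and let $c$ be the common discounted sum of paths from $s$. Then every path in $\mathrm{Cyl}(\hat\pi)$ has discounted sum $t:=\DS(\hat\pi)+\lambda^n c$. Hence $\Pr_C(L_t)\ge P(\hat\pi)>0$.

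For ($\Rightarrow$), suppose $\Pr_C(L_t)>0$. As in the proof of \cref{thm:MC-measure-zero}, combine \cref{prop:mc} with Equation~\eqref{eq:claim-MC} to obtain a path that enters a BSCC $B$ at some index $m$ with $\Pr_{s_m}(L_{t_m})>0$. The chain restricted to $B$ is strongly connected, so \cref{Lemma:MC-SC-1} shows that every path from $s_m$ in $B$ has discounted sum $t_m$. Thus $B$ is constant.

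\textbf{Testing a BSCC in polynomial time.} Fix a BSCC $B$. If all paths from some $s\in B$ have the same discounted sum, then by strong connectivity and the recursive identity \eqref{eq:DS-recursive}, every state $u\in B$ has a well-defined constant value $c_u$. These values satisfy $c_u = \ell(u,u') + \lambda c_{u'}$ for every edge $(u,u')$ of $B$. Conversely, if some function $c\colon B\to\R$ satisfies all these edge equations, then unrolling them along any infinite path from $u$ gives $\DS = c_u$, since the remainder term $\lambda^n c_{u_n}$ tends to $0$ because $c$ is bounded on the finite set $B$.

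So $B$ is constant exactly when the linear system $\{c_u - \lambda c_{u'} = \ell(u,u')\}$ over the edges of $B$ is feasible. There are $|B|$ unknowns and at most $|C|$ equations, with coefficients $1$ and $-\lambda$ and right-hand sides bounded by $\mu$. A cheaper test is also available: fix any spanning tree (or a single cycle through each state), compute $c$ by solving the system, and check the remaining equations. Either way, Gaussian elimination over $\Q$ runs in polynomial time with bit sizes polynomial in $|C|$, $\log q$ and $\log(1+\mu)$. The overall algorithm computes the BSCCs reachable from $s_{\textsf{init}}$ by graph search and tests each one.

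\textbf{Main obstacle.} The delicate step is the converse in the linear characterization: showing that a solution of the edge equations really forces every path to have discounted sum $c_u$. It relies on the tail term $\lambda^n c_{u_n}\to 0$, which follows from the boundedness of $c$ over the finite set $B$. A second point to keep clean is the bit-complexity bound. Solving a single connected system keeps this polynomial: propagating values along a BFS tree produces numbers whose denominators are powers of $p$ of depth at most $|B|$, so they have $O(|B|\log q + \log\mu)$ bits. The remaining edges are then checked by exact rational arithmetic, which stays within the stated bound.
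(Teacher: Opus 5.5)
Your proof is correct. The first half, characterizing atoms by a reachable BSCC that contains a state from which all paths have the same discounted sum (via \cref{prop:mc} and \cref{Lemma:MC-SC-1}), is the same as the paper's. You just spell out both directions, which the paper states in one sentence.

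The algorithmic half takes a different route. The paper treats each positive transition of a reachable BSCC $B$ as a deterministic action. It computes the minimum and maximum discounted sums from each state as optimal values of a discounted MDP by linear programming, and checks whether they coincide at some state. You instead test feasibility of the edge equations $c_u-\lambda c_{u'}=\ell(u,u')$ by Gaussian elimination. The tail argument $\lambda^n c_{u_n}\to 0$ then shows that a solution forces every path from $u$ to have discounted sum $c_u$.

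Your route is more elementary: plain linear algebra over $\Q$, with no MDP optimality theory or LP. It also returns the atom values directly. The same tail argument applied to the homogeneous system shows the solution is unique when it exists. The paper's route reuses standard discounted-MDP machinery and yields more information, namely the full range $[\min,\max]$ of discounted sums from each state.

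One small inaccuracy concerns your ``cheaper test''. The equations along a spanning tree leave the root value free, so you need one additional closing edge (a cycle through the root) to fix it. That root value generally has a denominator like $q^{k}-p^{k}$, not a power of $p$; for example, a self-loop labelled $1$ gives $q/(q-p)$. The bit sizes still stay polynomial, and full Gaussian elimination already suffices, so this does not affect your argument.
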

\begin{proof}
By \cref{prop:mc}(1),(3), and Lemma~\ref{Lemma:MC-SC-1}, there exists $t\in\R$ such that
$\Pr_C(L_{[\Sigma,\lambda,t]})>0$ iff some reachable BSCC contains a state
from which the minimum and maximum discounted sums of paths coincide.
For each reachable BSCC $B$, regard each positive transition
as a deterministic action, retaining its label as the reward.
The two extrema are then the minimum and maximum expected
discounted rewards in a finite MDP and can be computed exactly
by linear programming in $|B|$, $\log q$, and
$\log(1+\mu)$~\cite[Sec.~6.9]{Puterman94}.
The reachable BSCCs can be identified in time polynomial
in $|C|$~\cite{PMC-book}, giving the claimed bound.
\end{proof}

\begin{theorem}\label{thm:MC-threshold}
Given $t\in\Q$, $\varepsilon\in\Q_{>0}$ and $\sim\in\{>,\ge,<,\le\}$, one can compute the probability $\Pr_C(\{w\in\Sigma^\omega\mid\DS(w)\sim t\})$ up to precision $\varepsilon$, i.e., one can compute a rational $p$ such that $0 \le \Pr_C(\{w\in\Sigma^\omega\mid\DS(w)\sim t\}) - p \le \varepsilon$. 
\end{theorem}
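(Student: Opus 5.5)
The plan is the truncation-plus-correction scheme outlined in the introduction. By complementation it suffices to treat $\sim\,\in\{>,\ge\}$. Indeed, $\Pr_C(\DS<t)=1-\Pr_C(\DS\ge t)$ and $\Pr_C(\DS\le t)=1-\Pr_C(\DS>t)$. From a two-sided approximation of the latter quantities we therefore obtain a one-sided lower bound within $\varepsilon$ for the former.

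Write $W=(d_{\max}-d_{\min})/(1-\lambda)$ and $D_n(\pi)=\DS(\trace(\pi)[0..n-1])$. By \eqref{eq:DS-recursive}, every path with prefix of length $n$ has
\[
\DS(\pi)\in\Bigl[D_n+\lambda^n\tfrac{d_{\min}}{1-\lambda},\;D_n+\lambda^n\tfrac{d_{\max}}{1-\lambda}\Bigr],
\]
an interval $I_n(\pi)$ of width $\lambda^n W$.

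For each $n$, enumerate the finite paths of length $n$ from $s_{\textsf{init}}$ together with their exact rational probabilities. If the state space is large, first aggregate paths with equal (state, $D_n$) pairs; this is exponential but finite, which suffices since only computability is claimed. Let
\begin{align*}
\ell_n&:=\text{probability of prefixes with } I_n\subseteq(t,\infty),\\
u_n&:=\text{probability of prefixes with } I_n\cap[t,\infty)\neq\emptyset.
\end{align*}
Then $\ell_n\le\Pr_C(\DS>t)\le\Pr_C(\DS\ge t)\le u_n$. Both sequences are monotone, since intervals are nested along a path.

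Every path with $\DS(\pi)>t$ eventually has $I_n(\pi)\subseteq(t,\infty)$, because the widths shrink to $0$. Hence $\ell_n\uparrow\Pr_C(\DS>t)$ by continuity of measure. Symmetrically, a path is counted in $u_n$ for all $n$ iff $t$ lies in every interval $I_n(\pi)$, that is, iff $\DS(\pi)\ge t$. Hence $u_n\downarrow\Pr_C(\DS\ge t)$.

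Compute $e:=\Pr_C(L_{[\Sigma,\lambda,t]})$ exactly by \cref{thm:MC}. Then $\Pr_C(\DS\ge t)=\Pr_C(\DS>t)+e$. For $\sim\,=\,>$ we have $\ell_n\le\Pr_C(\DS>t)\le u_n-e$. For $\sim\,=\,\ge$ we have $\ell_n+e\le\Pr_C(\DS\ge t)\le u_n$. In both cases the gap is $u_n-e-\ell_n$, which tends to $\Pr_C(\DS\ge t)-e-\Pr_C(\DS>t)=0$.

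The algorithm therefore iterates $n=0,1,\dots$ until $u_n-e-\ell_n\le\varepsilon$. This is guaranteed to happen, and it outputs $p=\ell_n$ (resp.\ $\ell_n+e$), which satisfies $0\le\Pr-p\le\varepsilon$. For the complemented cases, output $1-(u_n-e)$ (resp.\ $1-u_n$), which satisfies the same one-sided bound.

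The main obstacle is not the termination argument, which is handled by the exact computation of $e$. It is making sure that the boundary cases of the intervals are counted consistently, with closed versus open endpoints and ties $D_n+\lambda^n d/(1-\lambda)=t$, so that the limits are exactly $\Pr(\DS>t)$ and $\Pr(\DS\ge t)$. The approach is to use closed intervals throughout and argue pathwise convergence as above.
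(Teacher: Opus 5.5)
Your proposal is correct and is essentially the paper's proof. Both use certified truncation bounds from the tail estimate, monotone convergence of the certified events, and exact subtraction of $e=\Pr_C(L_{[\Sigma,\lambda,t]})$ via Theorem~\ref{thm:MC} so that the gap provably closes; your asymmetric intervals and $u_n$ are cosmetic variants of the paper's symmetric $\pm r_n$ bounds and its upper bound $1-e-B_n$. One harmless slip: a path lies in every $u_n$-event iff each $I_n(\pi)$ meets $[t,\infty)$ (i.e.\ the decreasing right endpoints stay $\ge t$), not iff $t$ lies in every $I_n(\pi)$, though your conclusion $\DS(\pi)\ge t$ is right.
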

\begin{proof}
We give the algorithm for $>t$; its correctness and termination
are proved in Appendix~\ref{app:proof}.
Compute $e=\Pr_C(L_{[\Sigma,\lambda,t]})$ using
Theorem~\ref{thm:MC}.
For $n=0,1,\ldots$, let $r_n=\mu\lambda^n/(1-\lambda)$ (the absolute bound on the contribution to the discounted sum of all transitions beyond the first $n$)
and enumerate the finite paths
$\bar\pi=s_0\cdots s_n\in\FPaths(C)$.
Let $A_n$ be the sum of $P(\bar\pi)$ over those satisfying
$\DS(\bar\pi)>t+r_n$, and $B_n$ the corresponding sum
for $\DS(\bar\pi)<t-r_n$.
Stop when $(1-e -B_n) -A_n\le\varepsilon$ and return $A_n$.
Adding $e$ gives an approximation for $\ge t$;
taking complements gives the approximations for $\le t$ and $<t$.
\end{proof}

\section{Target discounted-sum objectives in stochastic games}
\label{sec:applications}

Throughout this section, we explore when the arguments for solving the stochastic \TDS problem lifts to richer models of Markov decision processes and 2-player stochastic games. In particular the guiding question is: under what restrictions on the players' strategies can the \TDS objective $L_{[\Sigma,\lambda,t]}$ be replaced, without altering the optimal value and strategies, by its $\omega$-regular sub-objective $L^{<\infty}_{[\Sigma,\lambda,t]}$?
We show that, when the maximizing player (if they exist) is restricted to finite memory strategies --while the minimizing player (if they exist) may use arbitrary strategies-- the two objectives are interchangeable.
Since $L^{<\infty}_{[\Sigma,\lambda,t]}$ is recognized by a deterministic safety automaton of pseudo-polynomial size (by Theorem~\ref{thm:omega-regularity}), this equivalence allows every value and synthesis problem to be reduced to the corresponding problems for stochastic games with standard safety objectives~\cite{Fijalkow_2026}.
We formalize and prove this first for MDPs, also known as $1$-player stochastic games (Theorem~\ref{thm:MDP-dec}), and then for general stochastic games (Theorem~\ref{thm:SG-dec}), of which deterministic $2$-player games are a special case.

\subsection{\TDS objectives in Markov decision processes}

\subparagraph*{Preliminaries}
An MDP is a tuple $\mathcal M=(S,\Sigma,s_{\mathsf{init}},\mathit{Act},P,\ell)$,
where $S$ is a finite set of states, $\Sigma$ is a finite alphabet,
$s_{\mathsf{init}}\in S$ is the initial state, $\mathit{Act}$ is a finite set of
actions, $P\colon S\times\mathit{Act}\times S\to[0,1]\cap\Q$ is a transition
function, and $\ell\colon S\times\mathit{Act}\to\Sigma$ is a labeling function.
For every state $s$ and action $a$ we require $\sum_{s'\in S}P(s,a,s')\in\{0,1\}$;
the action $a$ is \emph{enabled} at $s$ if this sum equals $1$, and we assume that
every state has at least one enabled action.
A \emph{history} is a sequence $h=s_0a_0s_1\cdots a_{n-1}s_n$ with
$s_0=s_{\mathsf{init}}$ and $P(s_i,a_i,s_{i+1})>0$ for all $0\leq i<n$; we write
$\mathrm H_{\mathcal M}$ for the set of histories.
Writing $\mathrm{Dist}(\mathit{Act})$ for the set of probability distributions over
$\mathit{Act}$, a \emph{strategy} is a function
$\alpha\colon\mathrm H_{\mathcal M}\to\mathrm{Dist}(\mathit{Act})$ with
$\alpha(h)(a)=0$ whenever $a$ is not enabled at the last state of $h$. Denote
$\Theta_{\mathcal M}$ as the set of all strategies.
A strategy has \emph{finite memory} if it can be implemented by a finite-state machine
$(Q,q_{\mathsf{init}},\delta,\tau)$, where $Q$ is a finite set of memory states,
$q_{\mathsf{init}}\in Q$ is the initial memory state,
$\delta\colon Q\times\mathit{Act}\times S\to Q$ is the memory-update function, and
$\tau\colon Q\times S\to\mathrm{Dist}(\mathit{Act})$ selects a distribution over the
enabled actions; such a strategy uses memory of size $|Q|$. We write
$\Theta^{\FM}_{\mathcal M}$ for the set of finite-memory strategies \cite[Def.~10.97]{PMC-book}. We write $|\mathcal M|$ for the size of the
representation of $\mathcal M$: states, actions, labels, and transitions with non-zero probability are encoded explicitly, while transition probabilities are encoded in binary.

Under a strategy $\alpha$, the MDP $\mathcal M$ induces a (possibly countably
infinite) MC $\mathcal C^{\mathcal M,\alpha}$ over $\Sigma$~\cite[Def.~10.92]{PMC-book}.
If $\alpha$ is finite memory, then $\mathcal C^{\mathcal M,\alpha}$ is finite.
For a measurable language $L\subseteq\Sigma^\omega$ denote
$V_{\mathcal M}^{\alpha}(L):=\Pr_{\mathcal C^{\mathcal M,\alpha}}(L)$, and for
$\opt\in\{\sup,\inf\}$ define
\(
  V_{\mathcal M}^{\opt}(L):=\opt_{\alpha\in\Theta_{\mathcal M}}V_{\mathcal M}^{\alpha}(L),\) and \(
  V_{\mathcal M}^{\FM\text{-}\opt}(L):=\opt_{\alpha\in\Theta^\FM_{\mathcal M}}V_{\mathcal M}^{\alpha}(L).
\)
A strategy $\alpha\in\Theta_{\mathcal M}$ is \emph{$\opt$-optimal} for $L$ if
$V_{\mathcal M}^{\alpha}(L)=V_{\mathcal M}^{\opt}(L)$, and
$\alpha\in\Theta^\FM_{\mathcal M}$ is \emph{$\FM\text{-}\opt$-optimal} for $L$ if
$V_{\mathcal M}^{\alpha}(L)=V_{\mathcal M}^{\FM\text{-}\opt}(L)$.

\paragraph*{Optimal values and strategies}\label{sec:MDP-result}
Let $\Sigma\subset\Z$ be a finite alphabet, $\lambda = p/q\in(0,1)\cap\Q$ a discount factor, $t=a/b\in\Q$ a target value, with $a/b$ and $p/q$ in reduced form. Let $\mathcal M$ be a finite MDP over alphabet $\Sigma$. Let $\mu=\max_{\sigma\in\Sigma}|\sigma|$.
We first prove \cref{lem:MDP-inf}, which relates the $\inf$- and $\sup$-optimal values of the objectives $L_{[\Sigma,\lambda,t]}$ and $L^{<\infty}_{[\Sigma,\lambda,t]}$.

\begin{lemma}\label{lem:MDP-inf}
\(V^{\inf}_{\mathcal M}(L_{[\Sigma,\lambda,t]}) = V^{\inf}_{\mathcal M}(L^{<\infty}_{[\Sigma,\lambda,t]}) \quad \text{ and }\quad V^{\textsf{FM-}\sup}_{\mathcal M}(L_{[\Sigma,\lambda,t]}) = V^{\sup}_{\mathcal M}(L^{<\infty}_{[\Sigma,\lambda,t]}).\)

\end{lemma}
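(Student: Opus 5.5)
The plan is to combine three facts. First, $L^{<\infty}_{[\Sigma,\lambda,t]}\subseteq L_{[\Sigma,\lambda,t]}$. Second, under a finite-memory strategy the induced Markov chain $\mathcal C^{\mathcal M,\alpha}$ is finite, so \cref{thm:MC-measure-zero} gives $V^{\alpha}_{\mathcal M}(L_{[\Sigma,\lambda,t]})=V^{\alpha}_{\mathcal M}(L^{<\infty}_{[\Sigma,\lambda,t]})$ for every $\alpha\in\Theta^{\FM}_{\mathcal M}$. Third, both the $\inf$ and the $\sup$ of the safety objective $L^{<\infty}_{[\Sigma,\lambda,t]}$ over \emph{all} strategies are attained by finite-memory strategies. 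I will not claim that $L^\infty_{[\Sigma,\lambda,t]}$ is null under arbitrary strategies. The induced chain may then be countably infinite, and the announced separating example suggests this null property genuinely fails.

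\textbf{Attainment via the product.} I form the product MDP $\mathcal M\times\mathcal A_{[\Sigma,\lambda,t]}$. Its states are pairs $(s,z)$, the action $a$ at $(s,z)$ moves to $(s',\delta(z,\ell(s,a)))$ with probability $P(s,a,s')$, and the unsafe states are those with $z=\bot$. Since $\mathcal A_{[\Sigma,\lambda,t]}$ is deterministic, histories of $\mathcal M$ and of the product are in bijection. Hence strategies correspond one-to-one, with equal probabilities of the event ``trace in $L(\mathcal A_{[\Sigma,\lambda,t]})$'', which by \cref{lemma:Automata-correctness} is the event $L^{<\infty}_{[\Sigma,\lambda,t]}$. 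Safety (equivalently reachability of $\bot$) objectives in finite MDPs admit memoryless deterministic optimal strategies for both $\sup$ and $\inf$~\cite{PMC-book,Puterman94}. Pulling such a strategy back to $\mathcal M$ gives a deterministic strategy $\alpha^*_{\opt}$ whose memory is the automaton state, so it is finite-memory with $|\mathcal A_{[\Sigma,\lambda,t]}|$ memory states. It satisfies $V^{\alpha^*_{\opt}}_{\mathcal M}(L^{<\infty}_{[\Sigma,\lambda,t]})=V^{\opt}_{\mathcal M}(L^{<\infty}_{[\Sigma,\lambda,t]})$.

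\textbf{Infimum.} Inclusion gives $V^{\inf}_{\mathcal M}(L^{<\infty}_{[\Sigma,\lambda,t]})\le V^{\inf}_{\mathcal M}(L_{[\Sigma,\lambda,t]})$. Conversely,
\[
V^{\inf}_{\mathcal M}(L_{[\Sigma,\lambda,t]})\le V^{\alpha^*_{\inf}}_{\mathcal M}(L_{[\Sigma,\lambda,t]})=V^{\alpha^*_{\inf}}_{\mathcal M}(L^{<\infty}_{[\Sigma,\lambda,t]})=V^{\inf}_{\mathcal M}(L^{<\infty}_{[\Sigma,\lambda,t]}),
\]
where the middle equality is \cref{thm:MC-measure-zero} applied to the finite chain $\mathcal C^{\mathcal M,\alpha^*_{\inf}}$.

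\textbf{Supremum.} For every $\alpha\in\Theta^{\FM}_{\mathcal M}$, \cref{thm:MC-measure-zero} gives $V^\alpha_{\mathcal M}(L_{[\Sigma,\lambda,t]})=V^\alpha_{\mathcal M}(L^{<\infty}_{[\Sigma,\lambda,t]})\le V^{\sup}_{\mathcal M}(L^{<\infty}_{[\Sigma,\lambda,t]})$, hence $V^{\FM\text{-}\sup}_{\mathcal M}(L_{[\Sigma,\lambda,t]})\le V^{\sup}_{\mathcal M}(L^{<\infty}_{[\Sigma,\lambda,t]})$. Conversely, $\alpha^*_{\sup}$ is finite-memory, so
\[
V^{\FM\text{-}\sup}_{\mathcal M}(L_{[\Sigma,\lambda,t]})\ge V^{\alpha^*_{\sup}}_{\mathcal M}(L_{[\Sigma,\lambda,t]})=V^{\alpha^*_{\sup}}_{\mathcal M}(L^{<\infty}_{[\Sigma,\lambda,t]})=V^{\sup}_{\mathcal M}(L^{<\infty}_{[\Sigma,\lambda,t]}).
\]

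\textbf{Main obstacle.} The delicate point is the product step. One must check that the value over all (possibly randomized, infinite-memory) strategies of $\mathcal M$ for $L^{<\infty}_{[\Sigma,\lambda,t]}$ equals the value of the safety objective in the product. This holds because the automaton is deterministic and reads exactly the labels $\ell(s,a)$, so strategies transfer in both directions. Existence of memoryless optimal strategies for safety/reachability in finite MDPs is then standard. A minor point is also needed: the finite-memory strategy pulled back from the product must induce a finite Markov chain, so that \cref{thm:MC-measure-zero} applies to it.
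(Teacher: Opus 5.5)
Your proposal is correct and follows essentially the same route as the paper. Both arguments rest on the same three ingredients: equality of the two objectives under every finite-memory strategy via \cref{thm:MC-measure-zero}, the inclusion $L^{<\infty}_{[\Sigma,\lambda,t]}\subseteq L_{[\Sigma,\lambda,t]}$ for the infimum, and the fact that finite-memory strategies attain both optima for $L^{<\infty}_{[\Sigma,\lambda,t]}$. The only difference is that you justify this last fact explicitly, through the product with $\mathcal A_{[\Sigma,\lambda,t]}$ and memoryless optimal safety strategies, whereas the paper simply cites that finite memory suffices for $\omega$-regular objectives in finite MDPs.
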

\begin{proof}
Let $\opt \in \{\sup,\inf\}$. Since every finite-memory strategy $\alpha$ induces a finite Markov chain, Theorem~\ref{thm:MC-measure-zero} gives
$V^\alpha_{\mathcal M}(L_{[\Sigma,\lambda,t]}) =
V^\alpha_{\mathcal M}(L^{<\infty}_{[\Sigma,\lambda,t]})$
for every $\alpha \in \Theta^\FM_{\mathcal M}$. Hence
\begin{equation}
    V^{\FM-\opt}_{\mathcal M}(L_{[\Sigma,\lambda,t]})
=
V^{\FM-\opt}_{\mathcal M}(L^{<\infty}_{[\Sigma,\lambda,t]})
=
V^{\opt}_{\mathcal M}(L^{<\infty}_{[\Sigma,\lambda,t]}).
\label{eq:fm-equality}
\end{equation}
where the second equality follows from Theorem~\ref{thm:omega-regularity} and the standard result that finite-memory strategies suffice for $\omega$-regular objectives in finite MDPs. For $\opt=\sup$, this gives the desired equality. For $\opt=\inf$, since
$L^{<\infty}_{[\Sigma,\lambda,t]} \subseteq L_{[\Sigma,\lambda,t]}$,
we have
$V^{\inf}_{\mathcal M}(L^{<\infty}_{[\Sigma,\lambda,t]})
\leq V^{\inf}_{\mathcal M}(L_{[\Sigma,\lambda,t]})$.
Moreover, since $\Theta^\FM_{\mathcal M}\subseteq\Theta_{\mathcal M}$, we have 
$V^{\inf}_{\mathcal M}(L_{[\Sigma,\lambda,t]})
\leq V^{\FM-\inf}_{\mathcal M}(L_{[\Sigma,\lambda,t]})$.
Together with \eqref{eq:fm-equality}, these inequalities yield
$V^{\inf}_{\mathcal M}(L_{[\Sigma,\lambda,t]})
=
V^{\inf}_{\mathcal M}(L^{<\infty}_{[\Sigma,\lambda,t]})$.
\end{proof}

The following results follow by combining Lemma~\ref{lem:MDP-inf}, Theorem~\ref{thm:omega-regularity}, and standard results for safety objectives in finite MDPs~\cite{PMC-book}. The language $L^{<\infty}_{[\Sigma,\lambda,t]}$ admits a deterministic safety automaton of pseudo-polynomial size, allowing its optimal value to be computed and a deterministic finite-memory optimal strategy to be synthesized in pseudo-polynomial time. Lemma~\ref{lem:MDP-inf} then relates these values and strategies to the original target discounted-sum objective in the MDP. See \cref{app:proof} for complete proofs.

\begin{theorem}\label{thm:MDP-dec}
For the \TDS objective $L_{[\Sigma,\lambda,t]}$:
\begin{enumerate}
    \item the $\inf$-optimal value is rational, can be computed, and an $\inf$-optimal deterministic strategy using $O(b q^2 \mu)$ memory can be synthesized, both in time polynomial in $|{\mathcal M}| \cdot |\Sigma|\cdot b q^2 \mu$;
    \item the $\FM$-$\sup$-optimal value is rational, can be computed, and an $\FM$-$\sup$-optimal deterministic strategy using $O(b q^2 \mu)$ memory can be synthesized, both in  time polynomial in $|{\mathcal M}| \cdot |\Sigma|\cdot b q^2 \mu$.
\end{enumerate}
\end{theorem}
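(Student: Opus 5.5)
The plan is to reduce both items to the safety objective recognized by the automaton $\mathcal A_{[\Sigma,\lambda,t]}$ of Theorem~\ref{thm:omega-regularity} and to work in the product MDP $\mathcal M\times\mathcal A_{[\Sigma,\lambda,t]}$. Its states are pairs $(s,z)$ with $z\in Q_{[\Sigma,\lambda,t]}\cup\{\bot\}$. From $(s,z)$, an action $a$ moves to $(s',\delta(z,\ell(s,a)))$ with probability $P(s,a,s')$, and the initial state is $(s_{\mathsf{init}},q_0)$. By Theorem~\ref{thm:omega-regularity} this product has $O(|\mathcal M|\cdot bq^2\mu)$ states and is built in time polynomial in $|\mathcal M|\cdot|\Sigma|\cdot bq^2\mu$. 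By Lemma~\ref{lemma:Automata-correctness}, a path of the product avoids states with second component $\bot$ exactly when its trace lies in $L^{<\infty}_{[\Sigma,\lambda,t]}$. Strategies of $\mathcal M$ and of the product correspond bijectively, because the automaton component is a deterministic function of the history. Hence $V^{\opt}_{\mathcal M}(L^{<\infty}_{[\Sigma,\lambda,t]})$ equals the optimal probability of the safety objective $\Box\neg\bot$ in the product, for $\opt\in\{\sup,\inf\}$.

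Next I will invoke the standard results for safety, equivalently reachability of $\bot$, in finite MDPs~\cite{PMC-book}. Both the maximal and the minimal probability are rational and computable by linear programming in time polynomial in the size of the product. Both are attained by memoryless deterministic strategies of the product, and these can be extracted from the LP solution together with a graph analysis within the same bound. Such a product strategy $\beta$ is lifted to a deterministic strategy of $\mathcal M$ whose memory is the automaton state: the memory set is $Q_{[\Sigma,\lambda,t]}\cup\{\bot\}$, the initial memory is $q_0$, the update is $\delta$ applied to the label of the chosen action, and the action choice is $\beta(s,z)$. This strategy uses $O(bq^2\mu)$ memory and is finite-memory.

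Finally I transfer the results back to $L_{[\Sigma,\lambda,t]}$ through Lemma~\ref{lem:MDP-inf}. The lifted strategy $\alpha$ is finite-memory, so by Theorem~\ref{thm:MC-measure-zero} we have $V^\alpha_{\mathcal M}(L_{[\Sigma,\lambda,t]})=V^\alpha_{\mathcal M}(L^{<\infty}_{[\Sigma,\lambda,t]})$. For item 1, $\alpha$ attains $V^{\inf}_{\mathcal M}(L^{<\infty}_{[\Sigma,\lambda,t]})$, and Lemma~\ref{lem:MDP-inf} says this equals $V^{\inf}_{\mathcal M}(L_{[\Sigma,\lambda,t]})$, so $\alpha$ is $\inf$-optimal for the \TDS objective. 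Item 2 follows in the same way from the $\FM$-$\sup$ equality of Lemma~\ref{lem:MDP-inf}: $\alpha\in\Theta^\FM_{\mathcal M}$ attains the value. Rationality of both values is inherited from the LP.

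The main obstacle is not any of these steps individually but the accounting of the pseudo-polynomial bound. The LP must be solved in time polynomial in the number of product states, which is $O(|\mathcal M|\,bq^2\mu)$, and in the bit size of the probabilities. The number of states is polynomial in the numerical values $b,q,\mu$, while the probabilities keep the bit size they have in $\mathcal M$, because the automaton is deterministic and adds no new probabilities. This yields the stated bound.
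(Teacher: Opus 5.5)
Your proposal is correct and follows the paper's proof. You reduce to the safety objective of $\mathcal A_{[\Sigma,\lambda,t]}$ via the product MDP, take a deterministic memoryless optimal product strategy lifted to a strategy with $O(bq^2\mu)$ memory, and transfer values and optimality back to $L_{[\Sigma,\lambda,t]}$ via Lemma~\ref{lem:MDP-inf} and Theorem~\ref{thm:MC-measure-zero}. The only difference is cosmetic and concerns item 2: you apply the measure-zero equality $V^\alpha_{\mathcal M}(L_{[\Sigma,\lambda,t]})=V^\alpha_{\mathcal M}(L^{<\infty}_{[\Sigma,\lambda,t]})$ directly, whereas the paper sandwiches using $L^{<\infty}_{[\Sigma,\lambda,t]}\subseteq L_{[\Sigma,\lambda,t]}$ and $\alpha\in\Theta^\FM_{\mathcal M}$; both arguments are valid.
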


The second item in~\cref{thm:MDP-dec} yields a dichotomy for $\sup$-optimal strategies (when they exist): either pseudo-polynomially bounded memory suffices, or infinite memory is necessary.
We provide evidence that the latter case does occur, by adapting the proof of~\cite[Proposition~1]{chatterjee_multi-objective_2013} to exhibit a finite MDP in which a $\sup$-optimal strategy exists, yet every such strategy needs infinite memory.
 \begin{proposition}
    For $\Sigma = \{0, 1\}$,  $\lambda = \frac23$, and $t = \frac32$, there exists an MDP ${\mathcal M}$ over alphabet $\Sigma$ such that for the \TDS objective $L_{[\Sigma, \lambda, t]}$, a $\sup$-optimal strategy $\alpha^* \in \Theta_{\mathcal M}$ exists and satisfies
    \(V^{\alpha^*}_{\mathcal M}(L_{[\Sigma,\lambda,t]}) > V^{\FM-\sup}_{\mathcal M}(L_{[\Sigma,\lambda,t]}).\)
\end{proposition}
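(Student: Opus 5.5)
The plan is to use the simplest possible MDP: one state $s$ with two actions $a_0,a_1$, each a deterministic self-loop on $s$, with $\ell(s,a_0)=0$ and $\ell(s,a_1)=1$. Strategies that depend only on the history then correspond exactly to infinite words over $\Sigma=\{0,1\}$. It suffices to show two facts.
\begin{enumerate}
\item[(a)] $L_{[\Sigma,\lambda,t]}\neq\emptyset$. Then the deterministic (infinite-memory) strategy $\alpha^*$ that plays the letters of a fixed word in this language has value $1$, so it is trivially $\sup$-optimal.
\item[(b)] $L^{<\infty}_{[\Sigma,\lambda,t]}=\emptyset$. Then Lemma~\ref{lem:MDP-inf} gives $V^{\FM\text{-}\sup}_{\mathcal M}(L_{[\Sigma,\lambda,t]})=V^{\sup}_{\mathcal M}(L^{<\infty}_{[\Sigma,\lambda,t]})=0<1$.
\end{enumerate}

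For (a), I would track required suffix sums as in Definition~\ref{def:automata}: $z_0=3/2$ and $z_{n+1}=(z_n-d_n)/\lambda=\tfrac32(z_n-d_n)$. The admissible range of suffix sums is $[0,\tfrac{1}{1-\lambda}]=[0,3]$. The invariant $z_n\in[0,3]$ can always be maintained: we need $z_n-d_n\in[0,2]$, so choose $d_n=0$ if $z_n\le 2$ and $d_n=1$ if $z_n\ge 1$. Maintaining the invariant forever produces a word $w$. By \eqref{eq:DS-recursive}, $\DS(w)=\DS(w[0..n-1])+\lambda^n z_n$ for each $n$, and since $z_n$ is bounded while $\lambda^n z_n\to0$, we get $\DS(w)=3/2$.

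For (b), I would argue by $2$-adic valuation. Let $w$ be any word with $\DS(w)=3/2$. Its suffix sums are exactly the $z_n$ above, computed with its own letters $d_n$. I claim $z_n$ has reduced denominator exactly $2^{n+1}$. This holds for $n=0$. For the induction step, if $z_n=m/2^{n+1}$ with $m$ odd, then $z_n-d_n=(m-d_n2^{n+1})/2^{n+1}$ still has an odd numerator, and multiplying by $3/2$ gives odd numerator $3(m-d_n2^{n+1})$ over $2^{n+2}$. Hence the suffix sums of $w$ are pairwise distinct, so there are infinitely many of them and $w\notin L^{<\infty}_{[\Sigma,\lambda,t]}$.

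Alternatively, (b) follows directly from Lemma~\ref{lemma:Q}: every suffix sum of a word in $L^{<\infty}$ would lie in $\tfrac{1}{bp}\Z=\tfrac14\Z$, which contradicts the growing denominators.

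The only delicate points are checking that (a) really yields $\DS(w)=t$, which the boundedness argument handles, and confirming that Lemma~\ref{lem:MDP-inf} applies to this MDP. Both are routine; the denominator computation in (b) is the heart of the example.
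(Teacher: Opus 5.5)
Your proof is correct. Its skeleton matches the paper's: the single-state MDP, an infinite-memory $\alpha^*$ that spells out a word of $L_{[\Sigma,\lambda,t]}$, and $V^{\FM\text{-}\sup}_{\mathcal M}(L_{[\Sigma,\lambda,t]})=V^{\sup}_{\mathcal M}(L^{<\infty}_{[\Sigma,\lambda,t]})=0$ via Lemma~\ref{lem:MDP-inf}.

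What differs is how you get (a) and (b). The paper imports both facts from the proof of Proposition~1 of Chatterjee et al.: a word with discounted sum $3/2$ exists, but no eventually periodic one does. It then deduces $L^{<\infty}_{[\Sigma,\lambda,t]}=\emptyset$ indirectly: this language is $\omega$-regular by Theorem~\ref{thm:omega-regularity}, and a non-empty $\omega$-regular language contains an eventually periodic word.

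You instead prove (a) by a greedy choice that keeps the required suffix sum in $[0,3]$. You prove (b) by a direct $2$-adic computation. This is exactly Case~1 of the proof of Lemma~\ref{lemma:Q} with $f=2$, which divides $p=2$ but not $q=3$: the $2$-adic valuation of the denominator rises by one at each step, starting from $\mathrm{pow}_2(b)=1$.

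The paper's route is shorter and shows how Theorem~\ref{thm:omega-regularity} reduces emptiness of $L^{<\infty}_{[\Sigma,\lambda,t]}$ to the absence of eventually periodic solutions. Yours is self-contained and exposes the mechanism. It also proves more: every word with discounted sum $3/2$ has pairwise distinct suffix sums. So here $L_{[\Sigma,\lambda,t]}=L^{\infty}_{[\Sigma,\lambda,t]}$ is non-empty, and hence not $\omega$-regular.

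One small slip in (a): the identity you need is $\tfrac32=\DS(w[0..n-1])+\lambda^n z_n$. It comes from telescoping $z_k=d_k+\lambda z_{k+1}$, as in the proof of Lemma~\ref{lemma:Automata-correctness}. It is not Equation~\eqref{eq:DS-recursive}, which concerns the true suffix sums, and these are not yet known to equal the $z_n$. Letting $n\to\infty$ then gives $\DS(w)=\tfrac32$.
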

\begin{proof}
Let $\Sigma = \{0, 1\}$,  $\lambda = \frac23$, and $t = \frac32$. Let ${\mathcal M} = (\{s\},\Sigma, s,\mathit{Act},P, \ell)$ be the single-state MDP with $\mathit{Act}=\{act_0,act_1\}$, $P(s,act_i,s)=1$, and $\ell(s,act_i)=i$ for $i\in\{0,1\}$.
By the proof of \cite[Proposition~1]{chatterjee_multi-objective_2013}, there is a word $w\in\Sigma^\omega$ with $\DS(w)=3/2$, but no eventually periodic word has discounted sum $3/2$. Hence $L_{[\Sigma,\lambda,t]}\neq\emptyset$, while $L^{<\infty}_{[\Sigma,\lambda,t]}$ contains no eventually periodic word. By Theorem~\ref{thm:omega-regularity}, $L^{<\infty}_{[\Sigma,\lambda,t]}$ is $\omega$-regular. Since every non-empty $\omega$-regular language contains an eventually periodic word, we obtain $L^{<\infty}_{[\Sigma,\lambda,t]}=\emptyset$.
Therefore, $V^{\sup}_{\mathcal M}(L^{<\infty}_{[\Sigma,\lambda,t]})=0$, and Lemma~\ref{lem:MDP-inf} gives $V^{\FM-\sup}_{\mathcal M}(L_{[\Sigma,\lambda,t]})=0$.
On the other hand, $M$ has a path for every word in $\Sigma^\omega$. Choose $w\in L_{[\Sigma,\lambda,t]}$ and let $\alpha^*$ deterministically follow the corresponding action sequence. Then $w$ is generated with probability $1$, and hence $V^{\alpha^*}_{\mathcal M}(L_{[\Sigma,\lambda,t]})=V^{\sup}_{\mathcal M}(L_{[\Sigma,\lambda,t]})=1$. Thus $V^{\alpha^*}_{\mathcal M}(L_{[\Sigma,\lambda,t]})>V^{\FM-\sup}_{\mathcal M}(L_{[\Sigma,\lambda,t]})$, as required.
\end{proof}

\subparagraph*{Remark} Our results on MDPs with \TDS objectives extend two decidability results for deterministic $1$-player games --a special case of MDPs-- represented by finite edge-labelled graphs. In such graphs, deterministic finite-memory strategies generate exactly the eventually periodic paths. Thus, the finite-memory supremum is $1$ exactly when an eventually periodic path reaches the target, and $0$ otherwise. This is a special case of the constrained eventually-periodic \TDS problem, shown decidable by Boker et al.~\cite{boker_target_2015}. The infimum, in turn, is $0$ exactly when some path has discounted sum different from the target, and $1$ otherwise. Such a path exists precisely when the minimum discounted sum is below the target or the maximum is above it, which can be checked using standard discounted-sum optimization in MDPs~\cite{Puterman94}.

\subsection{\TDS objectives in stochastic games}
\label{sec:SG-result}

\subparagraph*{Preliminaries}
A \emph{stochastic game} (SG) is a tuple
$\mathcal G=(S,\Sigma,s_{\mathsf{init}},\mathit{Act},P,\ell,\langle S_\Max,S_\Min\rangle)$,
where $(S,\Sigma,s_{\mathsf{init}},\mathit{Act},P,\ell)$ is an MDP and
$\langle S_\Max,S_\Min\rangle$ is a partition of $S$ into states controlled by the
\emph{maximizer} $\Max$ and by the \emph{minimizer} $\Min$. Histories are defined as for MDPs, and $\mathrm H^{\Max}_{\mathcal G}$ (resp.\ $\mathrm H^{\Min}_{\mathcal G}$) denotes the set of
histories whose last state belongs to $S_\Max$ (resp.\ $S_\Min$). A \emph{strategy of $\Max$}
is a function $\alpha\colon\mathrm H^{\Max}_{\mathcal G}\to\mathrm{Dist}(\mathit{Act})$
with $\alpha(h)(a)=0$ whenever $a$ is not enabled at the last state of $h$; strategies
$\beta$ of $\Min$ are defined symmetrically on $\mathrm H^{\Min}_{\mathcal G}$. Finite
memory is defined as for MDPs, the memory-update function being applied to histories
of the respective player. We write $\Theta^{\Max}_{\mathcal G},\Theta^{\Min}_{\mathcal G}$
for the sets of strategies of $\Max$ and $\Min$, and
$\Theta^{\Max,\FM}_{\mathcal G},\Theta^{\Min,\FM}_{\mathcal G}$ for their finite-memory
subsets.

A strategy profile $(\alpha,\beta)\in\Theta^{\Max}_{\mathcal G}\times\Theta^{\Min}_{\mathcal G}$ 
resolves all non-deterministic choices and induces a (possibly countably infinite) Markov chain
$\mathcal C^{\mathcal G,\alpha,\beta}$ over $\Sigma$, which is finite whenever both
$\alpha$ and $\beta$ are finite memory. As for MDPs, the objective of $\Max$ is to maximize the probability of a measurable language $L\subseteq\Sigma^\omega$, and we let
$V^{\alpha,\beta}_{\mathcal G}(L):=\Pr_{\mathcal C^{\mathcal G,\alpha,\beta}}(L)$
\[
  V^{\sup}_{\mathcal G}(L):=\sup_{\alpha\in\Theta^{\Max}_{\mathcal G}}\ \inf_{\beta\in\Theta^{\Min}_{\mathcal G}}V^{\alpha,\beta}_{\mathcal G}(L),
  \qquad
  V^{\FM\text{-}\sup}_{\mathcal G}(L):=\sup_{\alpha\in\Theta^{\Max,\FM}_{\mathcal G}}\ \inf_{\beta\in\Theta^{\Min}_{\mathcal G}}V^{\alpha,\beta}_{\mathcal G}(L),
\]
that is, the best probability that $\Max$ can guarantee against every strategy of $\Min$,
using arbitrary and finite memory respectively. A strategy $\alpha\in\Theta^{\Max}_{\mathcal G}$ is \emph{$\sup$-optimal} for $L$ if $\inf_{\beta}V^{\alpha,\beta}_{\mathcal G}(L)=V^{\sup}_{\mathcal G}(L)$, and $\alpha\in\Theta^{\Max,\FM}_{\mathcal G}$ is \emph{$\FM\text{-}\sup$-optimal} for $L$ if $\inf_{\beta}V^{\alpha,\beta}_{\mathcal G}(L)=V^{\FM\text{-}\sup}_{\mathcal G}(L)$. We write $|\mathcal G|$ for the size of the underlying MDP.

The game $\mathcal G$ is an MDP if $S_\Min = \emptyset$.  It is a \emph{(deterministic) $2$-player game} if $P(s,a,s')\in\{0,1\}$ for all $s,s'\in S$ and $a\in\mathit{Act}$, that is, if every action enabled at a state has a unique successor. In that case a deterministic strategy profile $(\alpha,\beta)$ induces a single play and $V^{\alpha,\beta}_{\mathcal G}(L)\in\{0,1\}$.

Finally, fixing a finite-memory strategy $\alpha\in\Theta^{\Max,\FM}_{\mathcal G}$
implemented by $(Q,q_{\mathsf{init}},\delta,\tau)$ yields a finite MDP
$\mathcal G^{\alpha}$ over $\Sigma$,
whose strategies are in memory-preserving bijection with $\Theta^{\Min}_{\mathcal G}$ and
satisfy $V^{\beta}_{\mathcal G^{\alpha}}(L)=V^{\alpha,\beta}_{\mathcal G}(L)$ for every
measurable objective $L$.

\paragraph*{Optimal values and strategies} \cref{thm:MDP-dec} generalizes to stochastic games as follows.

\begin{theorem}\label{thm:SG-dec}Let $\Sigma\subset\Z$ be a finite alphabet, with $\mu=\max_{\sigma\in\Sigma}|\sigma|$, let $\lambda = p/q\in(0,1)\cap\Q$ a discount factor,
$t=a/b\in\Q$ a target value, with $a/b$ and $p/q$ in reduced form, and let $\mathcal G$ be a finite SG over $\Sigma$.  
For the \TDS objective $L_{[\Sigma,\lambda,t]}$
\begin{itemize}
\item the $\FM$-$\sup$-optimal value in $\mathcal{G}$ can be computed 
\item an $\FM$-$\sup$-optimal deterministic strategy using $O(b q^2 \mu)$ memory can be synthesized
\item deciding whether the $\FM$-$\sup$ optimal value exceeds a given rational threshold is in $\mathrm{NP}\cap\mathrm{coNP}$ when $b, q$ and $\mu$ are given in unary. 
\end{itemize}
Moreover, these tasks can be carried out in time polynomial in $|\mathcal G| \cdot |\Sigma|\cdot b q^2 \mu$ when $\mathcal G$ is a $2$-player deterministic game.
\end{theorem}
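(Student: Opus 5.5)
The plan is to reduce everything to a safety objective on a product game. The reduction goes through the analogue of Lemma~\ref{lem:MDP-inf} for games: I will show
\[
V^{\FM\text{-}\sup}_{\mathcal G}(L_{[\Sigma,\lambda,t]}) = V^{\sup}_{\mathcal G}(L^{<\infty}_{[\Sigma,\lambda,t]}).
\]
Fix $\alpha\in\Theta^{\Max,\FM}_{\mathcal G}$. The game $\mathcal G^\alpha$ is then a finite MDP in which $\Min$ is the only player, and $V^{\beta}_{\mathcal G^\alpha}(L)=V^{\alpha,\beta}_{\mathcal G}(L)$ for every strategy $\beta$ of $\Min$. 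Applying the $\inf$ part of Lemma~\ref{lem:MDP-inf} to $\mathcal G^\alpha$ gives
\[
\inf_\beta V^{\alpha,\beta}_{\mathcal G}(L_{[\Sigma,\lambda,t]})=\inf_\beta V^{\alpha,\beta}_{\mathcal G}(L^{<\infty}_{[\Sigma,\lambda,t]}).
\]
This holds even though $\beta$ may have infinite memory, since Lemma~\ref{lem:MDP-inf} does not restrict the minimizer. Taking the supremum over finite-memory $\alpha$ yields
\[
V^{\FM\text{-}\sup}_{\mathcal G}(L_{[\Sigma,\lambda,t]})=V^{\FM\text{-}\sup}_{\mathcal G}(L^{<\infty}_{[\Sigma,\lambda,t]}).
\]
It remains to show that this last quantity equals $V^{\sup}_{\mathcal G}(L^{<\infty}_{[\Sigma,\lambda,t]})$. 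This follows because $L^{<\infty}_{[\Sigma,\lambda,t]}$ is recognized by the deterministic safety automaton $\mathcal A_{[\Sigma,\lambda,t]}$ of size $O(bq^2\mu)$ (Theorem~\ref{thm:omega-regularity}).

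Next, form the synchronous product $\mathcal G\times\mathcal A_{[\Sigma,\lambda,t]}$. Each state is a pair (game state, required suffix sum or $\bot$), the labels update the automaton component, and ownership and probabilities are inherited from $\mathcal G$. Its size is polynomial in $|\mathcal G|\cdot|\Sigma|\cdot bq^2\mu$, and the objective becomes the safety objective of avoiding $\bot$. I will invoke the standard results on finite stochastic games with safety objectives~\cite{Fijalkow_2026}. These games are determined, both players have deterministic memoryless optimal strategies, and the values are rational. A memoryless optimal $\Max$ strategy in the product is a deterministic strategy in $\mathcal G$ whose memory is the automaton state, so it uses $O(bq^2\mu)$ memory. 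Hence the supremum over all $\Max$ strategies in the product is achieved by a finite-memory strategy, which closes the equality above. Optimality in the product against arbitrary $\Min$ strategies transfers to $\mathcal G$, because histories of $\mathcal G$ and of the product are in bijection (the automaton is deterministic). So this strategy is $\FM$-$\sup$-optimal for $L_{[\Sigma,\lambda,t]}$ by the displayed equalities. Computability of the value follows, for example by solving the safety game with strategy iteration or by guessing strategies.

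For the complexity claims I will use two standard facts. First, quantitative safety (equivalently reachability) simple stochastic games lie in $\mathrm{NP}\cap\mathrm{coNP}$. To decide whether the value exceeds a threshold, guess memoryless strategies for both players and verify them by solving the resulting MDP with linear programming. With $b,q,\mu$ in unary the product has polynomial size, which gives the $\mathrm{NP}\cap\mathrm{coNP}$ bound. Second, when $\mathcal G$ is a deterministic $2$-player game, the product is a deterministic safety game. It is solved in linear time by computing the $\Min$ attractor to $\bot$, which yields the pseudo-polynomial bound together with an optimal strategy. Here the value is $0$ or $1$, and finite-memory strategies of $\Min$ never beat arbitrary ones in the relevant direction.

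The main obstacle is the interchange step, since $\Min$ may use infinite memory, so the induced Markov chain is not finite and Theorem~\ref{thm:MC-measure-zero} does not apply directly. I handle this by fixing $\Max$'s finite-memory strategy first and appealing to the $\inf$ half of Lemma~\ref{lem:MDP-inf}. That half compares the infimum for $L$ with the finite-memory infimum and uses the inclusion $L^{<\infty}\subseteq L$, so it never needs Theorem~\ref{thm:MC-measure-zero} for infinite chains. I will check carefully that the identification of $\Min$ strategies in $\mathcal G^\alpha$ with those in $\mathcal G$ preserves the values of both languages.
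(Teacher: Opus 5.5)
Your proposal is correct and follows essentially the same route as the paper. You fix a finite-memory strategy $\alpha$ of $\Max$, apply the $\inf$ half of Lemma~\ref{lem:MDP-inf} to the finite MDP $\mathcal G^{\alpha}$, and take the supremum over finite-memory $\alpha$. You then close the equality $V^{\FM\text{-}\sup}_{\mathcal G}(L^{<\infty}_{[\Sigma,\lambda,t]})=V^{\sup}_{\mathcal G}(L^{<\infty}_{[\Sigma,\lambda,t]})$ with a memoryless optimal strategy in the product with $\mathcal A_{[\Sigma,\lambda,t]}$, exactly as the paper does via~\cite{Fijalkow_2026}. Your extra details (memoryless determinacy of the product safety game, guess-and-verify with linear programming for $\mathrm{NP}\cap\mathrm{coNP}$, and the attractor computation for deterministic games) just unpack what the paper cites.
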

\begin{proof}[Proof sketch]
The proof rests on the identity
\begin{equation}\label{eq:sg-reduction}
  V^{\FM\text{-}\sup}_{\mathcal G}\bigl(L_{[\Sigma,\lambda,t]}\bigr)
  \;=\;V^{\sup}_{\mathcal G}\bigl(L^{<\infty}_{[\Sigma,\lambda,t]}\bigr),
\end{equation}
which is proved in Appendix~\ref{app:proof}, and reduces the optimization of \TDS objectives for finite-memory strategies of $\Max$ to the analogous problem for the sub-language $L^{<\infty}_{[\Sigma,\lambda,t]}$.

By Item 1 of~\cref{thm:omega-regularity}, $L^{<\infty}_{[\Sigma,\lambda,t]}=L(\mathcal A)$ for a deterministic safety automaton $\mathcal A$ with $|\mathcal A|=O(bq^2\mu)$, that one can construct effectively in time $O(bq^2\mu\cdot|\Sigma|)$. To conclude, it remains to reuse know n results on safety objectives in stochastic games~\cite{Fijalkow_2026}: writing $v:=V^{\sup}_{\mathcal G}(L^{<\infty}_{[\Sigma,\lambda,t]})$, the value $v$ is computable and a deterministic $\sup$-optimal strategy $\alpha^*\in\Theta^{\Max,\FM}_{\mathcal G}$ for $L^{<\infty}_{[\Sigma,\lambda,t]}$ using memory of size $|\mathcal A|$ can be synthesized, while comparing $v$ with a rational threshold is in $\mathrm{NP}\cap\mathrm{coNP}$. 

When $\mathcal G$ is a deterministic $2$-player game, the optimizing the \TDS objective reduces to solving a deterministic $2$-player safety game. Hence $v\in\{0,1\}$, and the value $v$ and an optimal stategy $\alpha^*$ can be computed in time polynomial in $|\mathcal G|\cdot|\mathcal A|$~\cite{Fijalkow_2026}. As $|\mathcal A|=O(bq^2\mu)$, this bound are pseudo polynomial and $\alpha^*$ uses pseudo-polynomial memory.
\end{proof}

\section{Conclusion}
In this paper, we solve the stochastic target discounted-sum problem and address further algorithmic questions about the distribution of discounted-sum payoffs in Markov chains and target discounted-sum objectives in stochastic games.
The original (deterministic) target discounted-sum problem remains open. Our contributions also leave some questions about the stochastic setting unresolved.
For probabilities of discounted-sum threshold languages in Markov chains, we provide  an approximation algorithm (up to any precision) but its computational complexity remains to be determined. We also leave open whether these probabilities are always rational and, if so, whether they can be computed exactly. 
Finally, while this paper focused on target discounted-sum in stochastic system, we are pursuing the opening that Theorem~\ref{thm:MC-threshold} provides to solving similar problems on threshold discounted-sum in stochastic systems (other than Markov chains).

\subparagraph*{Acknowledgments}
We thank Mohan Dantam and Pavol Kebis for their feedback on an earlier
version of this paper. We additionally thank Mohan Dantam for his observation
about Lemma~\ref{Lemma:MC-SC-1}, which prompted the statement and proof of
Theorem~\ref{thm:atoms}.


\bibliography{references-arXiv}
\appendix
\section{Proofs}\label{app:proof}
\begin{lemma}\label{lem:measurable}
$L_{[\Sigma,\lambda,t]}$, $L^{\infty}_{[\Sigma,\lambda,t]}$ and
$L^{<\infty}_{[\Sigma,\lambda,t]}$ are measurable.
\end{lemma}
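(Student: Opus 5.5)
We show that $L_{[\Sigma,\lambda,t]}$ is closed and that $L^{<\infty}_{[\Sigma,\lambda,t]}$ is Borel; then $L^{\infty}_{[\Sigma,\lambda,t]}=L_{[\Sigma,\lambda,t]}\setminus L^{<\infty}_{[\Sigma,\lambda,t]}$ is Borel as well. We use the product topology on $\Sigma^\omega$, where $\Sigma$ is finite and discrete.

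\emph{Step 1: $\DS$ is continuous, so $L_{[\Sigma,\lambda,t]}$ is closed.} If $w$ and $w'$ agree on their first $n$ letters, then
\[
|\DS(w)-\DS(w')|\le \lambda^n (d_{\max}-d_{\min})/(1-\lambda).
\]
Hence $\DS\colon\Sigma^\omega\to\R$ is (uniformly) continuous. The set $L_{[\Sigma,\lambda,t]}=\DS^{-1}(\{t\})$ is the preimage of a closed set, so it is closed and in particular Borel.

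\emph{Step 2: the suffix-sum maps are continuous.} For each $i\ge 0$, define $f_i(w):=\DS(w[i..\infty])$. Each $f_i$ is the composition of $\DS$ with the shift $w\mapsto w[i..]$. The shift is continuous, so $f_i$ is continuous. Consequently, for all $i,j$, the set $E_{i,j}:=\{w\mid f_i(w)=f_j(w)\}$ is closed.

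\emph{Step 3: express finiteness of the suffix sums with countable operations.} The set $\{f_i(w)\mid i\ge 0\}$ is finite iff there is $N$ such that every $f_i(w)$ equals one of $f_0(w),\dots,f_N(w)$. This gives
\[
F:=\bigcup_{N\ge 0}\ \bigcap_{i\ge 0}\ \bigcup_{j=0}^{N} E_{i,j}.
\]
Each inner finite union of closed sets is closed, the countable intersection is closed, and the outer countable union makes $F$ an $F_\sigma$ set. Therefore $L^{<\infty}_{[\Sigma,\lambda,t]}=L_{[\Sigma,\lambda,t]}\cap F$ and $L^{\infty}_{[\Sigma,\lambda,t]}=L_{[\Sigma,\lambda,t]}\setminus F$ are both Borel.

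\emph{Step 4: measurability in the Markov chain.} By the definition in the preliminaries, a Borel language is measurable. The map $\trace$ from paths to words is continuous, so $E^{\mathcal C}_L$ is the preimage of a Borel set and lies in $\mathcal F_{\mathcal C}$.

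The only point needing care is Step 3: the quantification over the infinite index set must be turned into a countable union of countable intersections. The explicit formula for $F$ above does this.

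Alternatively, for rational data (integer $\Sigma$, rational $\lambda$ and $t$), measurability of $L^{<\infty}_{[\Sigma,\lambda,t]}$ follows from \cref{lemma:Automata-correctness}, since $\omega$-regular languages are measurable. The direct argument is preferable because \cref{thm:MC-measure-zero} is stated for real alphabets.
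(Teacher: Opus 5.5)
Your proof is correct and follows essentially the same route as the paper. The paper also writes $L^{\infty}_{[\Sigma,\lambda,t]}$ as $L_{[\Sigma,\lambda,t]}$ intersected with a countable Boolean combination of the sets $\{w \mid S_i(w)\neq S_j(w)\}$ (the De Morgan dual of your $F$), and it justifies measurability of the suffix-sum maps by citing that they are random variables, whereas your self-contained continuity argument additionally shows that $L_{[\Sigma,\lambda,t]}$ is closed and $L^{<\infty}_{[\Sigma,\lambda,t]}$ is an $F_\sigma$ set.
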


\begin{proof}
For $i\ge 0$ the suffix sum function $S_i(w):=\DS(w[i\dots\infty])$ is a random variable on
$\Sigma^\omega$(follows from ~\cite[Chap.~5]{Puterman94}). In particular so are
$\DS=S_0$ and $S_i-S_j$ for all $i,j\ge0$.
Hence $L_{[\Sigma,\lambda,t]}=\DS^{-1}(\{t\})$ and
$U_{i,j}:=\{w\mid S_i(w)\neq S_j(w)\}=(S_i-S_j)^{-1}(\R\setminus\{0\})$ are measurable.
Since $\{S_i(w)\mid i\ge0\}$ is infinite iff for every $n$ some $S_i(w)$ differs from
all of $S_0(w),\dots,S_n(w)$, we have
\[
  L^{\infty}_{[\Sigma,\lambda,t]} \;=\; L_{[\Sigma,\lambda,t]} \,\cap
  (\bigcap_{n\ge0}\ \bigcup_{i\ge0}\ \bigcap_{j=0}^{n} U_{i,j}),
\]
a countable combination of measurable languages, hence measurable. Also $L^{<\infty}_{[\Sigma,\lambda,t]}$ is measurable as
$L^{<\infty}_{[\Sigma,\lambda,t]}=L_{[\Sigma,\lambda,t]}\setminus L^{\infty}_{[\Sigma,\lambda,t]}$.
\end{proof}

\begin{proof}[Proof of Lemma~\ref{lemma:Q}]
Let $w = d_0 d_1 \dots \in L^{<\infty}_{[\Sigma,\lambda,t]}$. By the definition of the language, $\DS(w) = t$, and the set of suffix-sums $S = \{\DS(w[i..\infty]) \mid i \ge 0\}$ is finite. Assume without loss of generality that the discount factor $\lambda = \frac{p}{q}$ and the target $t =\frac ab$ are in reduced form; that is, $\gcd(p,q) = 1$ and $\gcd(a,b) = 1$.
Because the elements of $\Sigma$ are bounded by $d_{\min}$ and $d_{\max}$, the inequality
\[
\frac{d_{\min}}{1-\lambda} \le \DS(w[i..\infty]) \le \frac{d_{\max}}{1-\lambda}
\]
holds for all $i \ge 0$. Thus, to prove that $\DS(w[i..\infty]) \in Q_{[\Sigma,\lambda,t]}$, it remains only to show that $\DS(w[i..\infty])$ can be written in the form $\frac{x}{bp}$ for some integer $x$.
Since $\DS(w) = t$ and, for every $i \ge 0$, $\DS(w) = \DS(w[0\dots i-1]) + \lambda^i\DS(w[i..\infty])$, the number $\DS(w[i..\infty])$ is a rational. Let $\DS(w[i..\infty]) = \frac{u_i}{v_i}$, where $u_i \in \mathbb{Z}$, $v_i \in \mathbb{Z}^+$, and $\gcd(u_i, v_i) = 1$. We wish to show that for all $i \ge 0$, the denominator $v_i$ divides the product $bp$.
For a prime number $f$ and a integer $n$, let $\text{pow}_f(n)$ denote the exponent of the highest power of $f$ that divides $n$, with the convention that $\text{pow}_f(0) = +\infty$.
Assume for contradiction that there exists some index $k \ge 0$ such that $v_k$ does not divide $bp$. This assumption implies that there exists a prime number $f$ such that
\[
\text{pow}_f(v_k) > \text{pow}_f(bp).
\]
Because the set of suffix-sums $S$ is finite, the set of denominators $\{v_i \mid i \ge 0\}$ must also be finite; therefore, the maximum value $M = \max_{i \ge 0} \text{pow}_f(v_i)$ exists. By our assumption, $M > \text{pow}_f(bp)$. Let $k$ be the smallest index such that $\text{pow}_f(v_k) = M$.
We immediately observe that
\begin{equation}\label{eq:Q-lemma-obs}
    \text{pow}_f(v_k) = M > \text{pow}_f(bp) \geq 0.
\end{equation}
Moreover, because $\gcd(u_k, v_k) = 1$ and $f$ divides $v_k$  (by the above equation), it follows that $f$ does not divide $u_k$.
\begin{equation}\label{eq:Q-lemma-case1}
     \text{pow}_f(u_k) = 0.
\end{equation}
We analyze this situation using two exhaustive cases.

\textbf{Case 1:} The prime $f$ divides $p$.
We will derive a contradiction by proving that $\text{pow}_f(v_{k+1}) > \text{pow}_f(v_k)$, which contradicts the choice of $k$.

Because $\DS(w[k..\infty]) = d_k + \lambda \DS(w[{k+1}..\infty])$, rearranging the terms and substituting the fractional forms gives the equation
\[
\frac{u_{k+1}}{v_{k+1}} = \frac{q}{p} \left( \frac{u_k}{v_k} - d_k \right) = \frac{q u_k - q d_k v_k}{p v_k}.
\]
Let $N = q u_k - q d_k v_k$ and $D = p v_k$. The exponent of $f$ in the reduced denominator $v_{k+1}$ is given by the difference of the exponents in the unreduced fraction:
\[
\text{pow}_f(v_{k+1}) = \text{pow}_f(D) - \text{pow}_f(N).
\]
Observe that
\[
\text{pow}_f(D) = \text{pow}_f(p) + \text{pow}_f(v_k).
\]
Next, we show that $\text{pow}_f(N) = 0$ by analyzing the exponents of the two terms in $N$. First, because $\gcd(p,q) = 1$, the assumption that $f$ divides $p$ implies that $f$ does not divide $q$, that is, $\text{pow}_f(q) = 0$.
Thus together with Equation~\eqref{eq:Q-lemma-case1}, we have $\text{pow}_f(q u_k) = 0$. Second, we see that $\text{pow}_f(q d_k v_k) > 0$ because $\text{pow}_f(q d_k v_k) \ge \text{pow}_f(v_k)$ and $\text{pow}_f(v_k) > 0$ (Equation~\eqref{eq:Q-lemma-obs}).
Because the two terms in $N$ have distinct exponents of $f$, the exponent of their difference is the minimum of the two exponents:
\begin{align*}
    \text{pow}_f(N) &= \text{pow}_f(q u_k - q d_k v_k) \\
    &= \min\{\text{pow}_f(q u_k), \text{pow}_f(q d_k v_k)\}  = 0.
\end{align*}

Substituting $\text{pow}_f(D)$ and $\text{pow}_f(N)$ into the equation for $\text{pow}_f(v_{k+1})$ gives
\[
\text{pow}_f(v_{k+1}) = \text{pow}_f(D) - \text{pow}_f(N) = \text{pow}_f(p)+ \text{pow}_f(v_k).
\]
Because $f$ divides $p$, we know that $\text{pow}_f(p) \ge 1$. Hence, $\text{pow}_f(v_{k+1}) > \text{pow}_f(v_k)$, which contradicts the maximality of $M$.

\textbf{Case 2:} The prime $f$ does not divide $p$.
We will derive a contradiction by showing that $k \ge 1$ and $\text{pow}_f(v_{k-1}) = \text{pow}_f(v_k)$. This contradicts the assumption that $k$ is the smallest index for which $\text{pow}_f(v_k) = M$.
We first show that $M >\text{pow}_f(v_0)$ to conclude that $k \geq 1$. 
Observe that $\frac{u_0}{v_0} = \DS(w) = t = \frac{a}{b}$. Because both $\frac{u_0}{v_0}$ and $\frac{a}{b}$ are in reduced form, $v_0 = b$, therefore $\text{pow}_f(v_0) = \text{pow}_f(b)$. Moreover, $M >\text{pow}_f(b)$ because $M > \text{pow}_f(bp)$ (Equation~\eqref{eq:Q-lemma-obs}). Therefore, it follows that $M > \text{pow}_f(v_0)$. 
Next we show that $\text{pow}_f(v_{k-1}) = \text{pow}_f(v_k)$. 
Because $\DS(w[k-1..\infty]) = d_{k-1} + \lambda \DS(w[k..\infty])$, substituting the fractional forms gives the equation
\[
\frac{u_{k-1}}{v_{k-1}} = {d_{k-1}} + \frac{p}{q} \frac{u_k}{v_k} = \frac{q d_{k-1} v_k + p u_k}{q v_k}.
\]
Let $N = q d_{k-1} v_k + p u_k$ and $D = q v_k$. Then
\[
\text{pow}_f(v_{k-1}) = \text{pow}_f(D) - \text{pow}_f(N).
\]
Observe that
\[
\text{pow}_f(D) = \text{pow}_f(q) + \text{pow}_f(v_k).
\]
Next, we have $\text{pow}_f(N) = 0$. This holds because $\text{pow}_f(p u_k) = 0$ (since $f$ does not divide $p$, that is $\text{pow}_f(p) = 0$ and using Equation~\eqref{eq:Q-lemma-case1}) while $\text{pow}_f(q d_{k-1} v_k) > 0$ (by Equation~\eqref{eq:Q-lemma-obs}).
Substituting $\text{pow}_f(D)$ and $\text{pow}_f(N)$ into the equation for $\text{pow}_f(v_{k-1})$ yields
\[
\text{pow}_f(v_{k-1}) = \text{pow}_f(D) - \text{pow}_f(N) = \text{pow}_f(q) + \text{pow}_f(v_k).
\]
Note that $\text{pow}_f(v_{k-1}) = M$ because $\text{pow}_f(v_k) = M$, $\text{pow}_f(v_{k-1}) \leq M$, and $\text{pow}_f(q) \geq 0$. This contradicts the assumption that $k$ is the smallest index for which $\text{pow}_f(v_k) = M$. 
\end{proof}
\begin{proof}[Proof of \cref{lemma:Automata-correctness}]
The direction $L^{<\infty}_{[\Sigma,\lambda,t]} \subseteq L(\mathcal{A}_{[\Sigma,\lambda,t]})$ is straightforward. For a word $w = d_0d_1 \dots \in L^{<\infty}_{[\Sigma,\lambda,t]}$, the sequence $q_0, q_1,\dots$ where $q_i = \DS(w[i..\infty])$ is its unique run in $\mathcal{A}_{[\Sigma,\lambda,t]}$ that avoids $\bot$. Indeed, every $q_i \in Q_{[\Sigma,\lambda,t]}$ (by Lemma~\ref{lemma:Q}), the initial state $q_0 =t$, and every pair of states satisfy the transition relation $q_i = d_i + \lambda q_{i+1}$.
We now show the other direction $L(\mathcal{A}_{[\Sigma,\lambda,t]}) \subseteq L^{<\infty}_{[\Sigma,\lambda,t]}$. Let $w = d_0 d_1 \dots \in L(\mathcal{A}_{[\Sigma,\lambda,t]})$. Because $\mathcal{A}_{[\Sigma,\lambda,t]}$ is a deterministic safety automaton with the set of safe states $F = Q_{[\Sigma,\lambda,t]}$, the unique run $q_0, q_1, \dots$ of the automaton on $w$ avoids the sink state $\bot$. Thus, for all $i \ge 0$, we have $q_i \in Q_{[\Sigma,\lambda,t]}$.
By the definition of the transition function $\delta$, we have, for all $i \ge 0$:
\(
q_i = d_i + \lambda q_{i+1}.
\)
By repeated substitution, this gives, for all $n \ge 0$:
\begin{equation} \label{eq:lemma-correctness-autoamta}
    q_i = \sum_{j=0}^{n} d_{i+j} \lambda^j + \lambda^{n+1} q_{i+ n+1}.
\end{equation}
Since $Q_{[\Sigma,\lambda,t]}$ is a finite set, the values $q_{i+n+1}$ are bounded. Thus, the reminder term $\lambda^{n+1} q_{i+n+1}$ vanishes as $n \to \infty$. Furthermore, because rationals $d_{i+j}$ belongs to $\Sigma$ which is finite, the infinite series
$\sum_{j=0}^{\infty} d_{i+j} \lambda^j$ converges. Taking the limit as $n \to \infty$ in Equation~\eqref{eq:lemma-correctness-autoamta} therefore gives:
\(
q_i = \sum_{j=0}^{\infty} d_{i+j} \lambda^j  = \DS(w[i\dots \infty]).
\)
Since $q_0 = t$, it follows that the discounted sum $\DS(w) = t$. Moreover because $q_i \in Q_{[\Sigma,\lambda,t]}$ for all $i \ge 0$, and $Q_{[\Sigma,\lambda,t]}$ is finite, the set of all suffix-sums $\{\DS(w[i..\infty]) \mid i \ge 0\}$ is also finite. Consequently, $w \in L^{<\infty}_{[\Sigma,\lambda,t]}$.
\end{proof}
\begin{proof}[Proof of \cref{thm:omega-regularity} (continued)]
    We establish that the deterministic safety automaton $\mathcal{A}_{[\Sigma,\lambda,t]}$ has a pseudo-polynomial size. Observe from Lemma~\ref{lemma:Q} that the cardinality of the state space $Q_{[\Sigma,\lambda,t]} \cup \{\bot\}$ of $\mathcal{A}_{[\Sigma,\lambda,t]}$ is bounded by
    \(bp\frac{d_{\max} - d_{\min}}{1 -\lambda} + 1,\)
    where $d_{\min} = \min \Sigma$, $d_{\max} = \max \Sigma$, and $p, b \in \mathbb{Z}^+$ such that $\lambda = \frac{p}{q}$ and $t = \frac{a}{b}$ for some $q \in \mathbb{Z}^+$ and $a \in \mathbb{Z}$. 
      Substituting the fractional forms of $\lambda$ in the state space bound gives:
    \(bpq\frac{d_{\max} - d_{\min}}{q - p} + 1.\)
    Since $\lambda \in (0,1)$, we have $p < q$. The above term is therefore strictly bounded by the worst-case scenario where $p = q - 1$, which gives:
    \(b(q-1)q(d_{\max} - d_{\min}) + 1.\)
    Consequently, the state space size is polynomial when $b$, $q$, $d_{\max}$, and $d_{\min}$ are encoded in unary.
\end{proof}

\begin{proof}[Proof of Theorem~\ref{thm:MC-threshold} (continued)]
We use the notation of the algorithm in the main text.
For every $n\ge0$, define the events
$E_n^>:=\{\pi\in\Paths(C)\mid\DS(\pi[0..n])>t+r_n\}$
and
$E_n^<:=\{\pi\in\Paths(C)\mid\DS(\pi[0..n])<t-r_n\}$.
Each is a disjoint union of cylinders of finite paths with
$n$ transitions. Hence $A_n=\Pr_C(E_n^>)$ and
$B_n=\Pr_C(E_n^<)$ are rational and computable by enumeration.

By Equation~\eqref{eq:DS-recursive}, every path
$\pi\in\Paths(C)$ satisfies
$|\DS(\pi)-\DS(\pi[0..n])|\le r_n$.
Consequently, $E_n^>$ contains only paths with discounted sum
greater than $t$, and $E_n^<$ contains only paths with
discounted sum less than $t$. Thus
\[
A_n\le
\Pr_C(\{w\in\Sigma^\omega\mid\DS(w)>t\})
\le 1-e-B_n.
\]
Whenever the stopping condition holds, these bounds show that
the returned value at most $\varepsilon$ far from the true probability.

It remains to prove termination.
Write $\pi=s_0s_1\cdots$.
Since $|\ell(s_n,s_{n+1})|\le\mu$ and
$r_n-r_{n+1}=\mu\lambda^n$, we have
\begin{align*}
\DS(\pi[0..n+1])-r_{n+1}
&\ge \DS(\pi[0..n])-r_n,\\
\DS(\pi[0..n+1])+r_{n+1}
&\le \DS(\pi[0..n])+r_n.
\end{align*}
Therefore $E_n^>\subseteq E_{n+1}^>$ and
$E_n^<\subseteq E_{n+1}^<$.
Moreover, if $\DS(\pi)>t$, then $r_n\to0$ allows us to
choose $n$ with $2r_n<\DS(\pi)-t$.
The tail bound then gives $\DS(\pi[0..n])>t+r_n$,
so $\pi\in E_n^>$.
The analogous argument applies when $\DS(\pi)<t$.
Hence
\[
\bigcup_{n\ge0}E_n^>
=\{\pi\in\Paths(C)\mid\DS(\pi)>t\},
\qquad
\bigcup_{n\ge0}E_n^<
=\{\pi\in\Paths(C)\mid\DS(\pi)<t\}.
\]
By continuity of probability for increasing unions,
$A_n$ and $B_n$ converge to the probabilities of discounted
sum greater than and less than $t$, respectively.
These probabilities sum to $1-e$, so
$1-e-A_n-B_n\to0$.
Since $\varepsilon>0$, the stopping condition is satisfied
at some finite depth $N$.
Adding the exact value $e$ or taking complements preserves
the precision, proving the remaining comparisons.
\end{proof}

\begin{proof}[Proof of \cref{thm:MDP-dec}(1)]
By Lemma~\ref{lem:MDP-inf}, the inf-optimal value
\begin{equation}\label{eq:appl-decide-inf-1}
    V^{\inf}_M(L_{[\Sigma,\lambda,t]}) = V^{\inf}_M(L^{<\infty}_{[\Sigma,\lambda,t]}).
\end{equation}
Theorem~\ref{thm:omega-regularity} establishes that the language $L^{<\infty}_{[\Sigma,\lambda,t]}$ is recognized by a deterministic safety automaton of pseudo-polynomial size. Hence, by standard product construction~\cite{PMC-book}, for objective $L^{<\infty}_{[\Sigma,\lambda,t]}$: (1) the inf-optimal value $V^{\inf}_M(L^{<\infty}_{[\Sigma,\lambda,t]})$ is computable in pseudo-polynomial time; and (2) a inf-optimal strategy $\alpha^*$ that is deterministic and uses pseudo-polynomial memory can be synthesized in pseudo-polynomial time. Thus by Equation~\eqref{eq:appl-decide-inf-1}, $V^{\inf}_M(L_{[\Sigma,\lambda,t]})$ is computable in pseudo-polynomial time.  We conclude the proof by showing $\alpha^*$ is also inf-optimal for objective $L_{[\Sigma,\lambda,t]}$, that is,  $V^{\inf}_M(L_{[\Sigma,\lambda,t]}) =V^{\alpha^*}_M(L_{[\Sigma,\lambda,t]})$.
First since $\alpha^*$ is inf-optimal for objective $L^{<\infty}_{[\Sigma,\lambda,t]}$ we have:
\begin{equation}
    \label{eq:appl-decide-inf-2}
    V^{\alpha^*}_M(L^{<\infty}_{[\Sigma,\lambda,t]}) = V^{\inf}_M(L^{<\infty}_{[\Sigma,\lambda,t]}).
\end{equation}
Next since $\alpha^*$ is a finite-memory strategy, it induces a finite Markov chain. By Theorem~\ref{thm:MC-measure-zero} we have:
$V^{\alpha^*}_M(L^{<\infty}_{[\Sigma,\lambda,t]}) = V^{\alpha^*}_M(L_{[\Sigma,\lambda,t]}).$
Thus, together with equations~\ref{eq:appl-decide-inf-1} and \ref{eq:appl-decide-inf-2}, $V^{\inf}_M(L_{[\Sigma,\lambda,t]}) =V^{\alpha^*}_M(L_{[\Sigma,\lambda,t]})$, which concludes the proof. 
\end{proof}
\begin{proof}[Proof of \cref{thm:MDP-dec}(2)]
By Lemma~\ref{lem:MDP-inf}, the FM-sup-optimal value
\begin{equation}\label{eq:appl-decide-sup-1}
    V^{\FM-\sup}_M(L_{[\Sigma,\lambda,t]}) = V^{\sup}_M(L^{<\infty}_{[\Sigma,\lambda,t]}).
\end{equation}
Theorem~\ref{thm:omega-regularity} establishes that the language $L^{<\infty}_{[\Sigma,\lambda,t]}$ is recognized by a deterministic safety automaton of pseudo-polynomial size. Hence, by standard product construction and the algorithms for solving safety objectives in MDP~\cite{PMC-book}, for objective $L^{<\infty}_{[\Sigma,\lambda,t]}$: (1) the sup-optimal value $V^{\sup}_M(L^{<\infty}_{[\Sigma,\lambda,t]}) $ is computable in pseudo-polynomial time; and (2) a sup-optimal strategy $\alpha^*$ that is deterministic and uses pseudo-polynomial memory can be synthesized in pseudo-polynomial time. 
Thus by Equation~\eqref{eq:appl-decide-sup-1}, $V^{\textsf{FM-}\sup}_M(L_{[\Sigma,\lambda,t]})$ is computable in pseudo-polynomial time.
We conclude the proof by showing $\alpha^*$ is also FM-sup-optimal for objective $L_{[\Sigma,\lambda,t]}$, that is,  $V^{\textsf{FM-}\sup}_M(L_{[\Sigma,\lambda,t]}) =V^{\alpha^*}_M(L_{[\Sigma,\lambda,t]})$.

First, since $\alpha^*$ is a sup-optimal strategy for objective $L^{<\infty}_{[\Sigma,\lambda,t]}$ we have: 
\begin{equation}
    \label{eq:appl-decide-sup-2}
    V^{\alpha^*}_M(L^{<\infty}_{[\Sigma,\lambda,t]}) = V^{\sup}_M(L^{<\infty}_{[\Sigma,\lambda,t]}).
\end{equation}
Next, because $\alpha^*$ is a finite-memory strategy we have:
\begin{equation}
    \label{eq:appl-decide-sup-3}
    V^{\alpha^*}_M(L_{[\Sigma,\lambda,t]}) \leq V^{\textsf{FM-}\sup}_M(L_{[\Sigma,\lambda,t]}).
\end{equation}
Finally, because $L^{<\infty}_{[\Sigma,\lambda,t]}$ is a subset of $L_{[\Sigma,\lambda,t]}$, we have:
\(V^{\alpha^*}_M(L^{<\infty}_{[\Sigma,\lambda,t]}) \le V^{\alpha^*}_M(L_{[\Sigma,\lambda,t]}).\) Thus, together with equations~\ref{eq:appl-decide-sup-1}, \ref{eq:appl-decide-sup-2}, and \ref{eq:appl-decide-sup-3}, we have: $V^{\textsf{FM-}\sup}_M(L_{[\Sigma,\lambda,t]}) =V^{\alpha^*}_M(L_{[\Sigma,\lambda,t]})$, which concludes the proof.
\end{proof}

\begin{proof}[Proof of \cref{thm:SG-dec} (continued)]
We here establish \eqref{eq:sg-reduction} and show that $\alpha^*$ is
$\FM\text{-}\sup$-optimal for $L_{[\Sigma,\lambda,t]}$. Both follow by fixing a
finite-memory strategy of $\Max$ and applying \cref{lem:MDP-inf} to the finite MDP left
for $\Min$. Indeed, fix $\alpha\in\Theta^{\Max,\FM}_{\mathcal G}$: since
$\mathcal G^{\alpha}$ is a finite MDP controlled by $\Min$, the first identity of
\cref{lem:MDP-inf} yields
\[
  \inf_{\beta}V^{\alpha,\beta}_{\mathcal G}\bigl(L_{[\Sigma,\lambda,t]}\bigr)
  =V^{\inf}_{\mathcal G^{\alpha}}\bigl(L_{[\Sigma,\lambda,t]}\bigr)
  =V^{\inf}_{\mathcal G^{\alpha}}\bigl(L^{<\infty}_{[\Sigma,\lambda,t]}\bigr)
  =\inf_{\beta}V^{\alpha,\beta}_{\mathcal G}\bigl(L^{<\infty}_{[\Sigma,\lambda,t]}\bigr),
\]
and taking the supremum over $\alpha\in\Theta^{\Max,\FM}_{\mathcal G}$ gives
$V^{\FM\text{-}\sup}_{\mathcal G}(L_{[\Sigma,\lambda,t]})
 =V^{\FM\text{-}\sup}_{\mathcal G}(L^{<\infty}_{[\Sigma,\lambda,t]})$. The latter equals
$v$: it is at most $v$ because
$\Theta^{\Max,\FM}_{\mathcal G}\subseteq\Theta^{\Max}_{\mathcal G}$, and at least $v$
because $\alpha^*$ has finite memory and is $\sup$-optimal for
$L^{<\infty}_{[\Sigma,\lambda,t]}$; this proves \eqref{eq:sg-reduction}. Finally, since
$L^{<\infty}_{[\Sigma,\lambda,t]}\subseteq L_{[\Sigma,\lambda,t]}$, we get
$\inf_{\beta}V^{\alpha^*,\beta}_{\mathcal G}(L_{[\Sigma,\lambda,t]})
 \ge\inf_{\beta}V^{\alpha^*,\beta}_{\mathcal G}(L^{<\infty}_{[\Sigma,\lambda,t]})=v$, so
$\alpha^*$ is $\FM\text{-}\sup$-optimal for $L_{[\Sigma,\lambda,t]}$. 
\end{proof}

\end{document}